  \newcommand{\ra}[1]{\renewcommand{\arraystretch}{#1}}
\newcommand{\eps}{\varepsilon}
\DeclareMathOperator{\A}{\mathcal A}
\newcommand{\complclass}[1]{\textsc{#1}\xspace}
\newcommand{\PSpace}{\complclass{PSpace}}
\newcommand{\PTime}{\complclass{PTime}}
\newcommand{\NP}{\complclass{NP}}
\newtheorem{hypo}[theorem]{Hypothesis}
\title{Speed Me up if You Can:\\ Conditional Lower Bounds on Opacity Verification}
\titlerunning{Speed Me up if You Can: Conditional Lower Bounds on Opacity Verification}
\author{Ji\v{r}\'{i} Balun}{Faculty of Science, Palacky University Olomouc, Czechia}{jiri.balun01@upol.cz}{https://orcid.org/0000-0003-2332-9354}{Supported by IGA PrF 2023 026.}
\author{Tom{\' a}{\v s}~Masopust}{Faculty of Science, Palacky University Olomouc, Czechia}{tomas.masopust@upol.cz}{https://orcid.org/0000-0001-9282-758X}{}
\author{Petr Osi\v{c}ka}{Faculty of Science, Palacky University Olomouc, Czechia}{petr.osicka@upol.cz}{https://orcid.org/0000-0002-7495-7724}{}
\authorrunning{J. Balun, T. Masopust, and P. Osi\v{c}ka} 
\keywords{Finite automata, opacity, fine-grained complexity} 
\begin{document}

\maketitle

\begin{abstract}
  Opacity is a property of privacy and security applications asking whether, given a system model, a passive intruder that makes online observations of system's behaviour can ascertain some ``secret'' information of the system. Deciding opacity is a \PSpace-complete problem, and hence there are no polynomial-time algorithms to verify opacity under the assumption that \PSpace differs from \PTime. This assumption, however, gives rise to a question whether the existing exponential-time algorithms are the best possible or whether there are faster, sub-exponential-time algorithms. We show that under the (Strong) Exponential Time Hypothesis, there are no algorithms that would be significantly faster than the existing algorithms. As a by-product, we obtained a new conditional lower bound on the time complexity of deciding universality (and therefore also inclusion and equivalence) for nondeterministic finite automata.
\end{abstract}

\section{Introduction}
  In privacy and security applications and communication protocols, it is desirable to keep some information about the system or its behaviour secret. Such requirements put additional restrictions on the information flow of the system and have widely been discussed in the literature as various properties, including {\em anonymity}~\cite{Dingledine,ReiterR98,SchneiderS96}, {\em non-interference}~\cite{BestDG2010,BusiG09,Hadj-Alouane05,SabelfeldM03}, {\em secrecy}~\cite{Alur2006,BadouelBBCD07,BryansKR04}, {\em security}~\cite{Focardi94}, {\em perfect security}~\cite{ZakinthinosL97}, and {\em opacity}~\cite{Hadjicostis2020,Mazare04}.
  {\em Anonymity\/} is the property to preserve secrecy of identity of actions; for instance, web servers should not be able to learn the true source of a request.
  {\em Non-interference\/} asks whether, give two input states of the system that share the same values of specified variables, the behaviors of the system started from these states are indistinguishable by the observer under the observation of the specified variables.
  {\em Secrecy\/} expresses whether an observer can ever find out that a trajectory of the system belongs to a set of secret trajectories, and {\em perfect security\/} requires that an observer that knows the set of all trajectories of the system cannot deduce any information about occurrences of high-security events by observing low-security events.

  In this paper, we are interested in (various types of) opacity, which in a sense generalizes the other mentioned properties; namely, the properties above can be verified by reduction to opacity. More specifically, Alur et al.~\cite{Alur2006} have shown that {\em secrecy\/} captures {\em non-interference\/} and {\em perfect security}, and Lin~\cite{Lin2011} has provided an extensive discussion and comparison of all these properties. He has in particular shown that {\em anonymity\/} and {\em secrecy}, and the properties of {\em observability}~\cite{LinW88}, {\em diagnosability}~\cite{Lin94,SampathSLST95}, and {\em detectability}~\cite{Masopust18,MasopustY19,ShuLY07} of discrete-event systems are special cases of opacity.
  Wu et al.~\cite{WuRRLS18} and Góes et al.~\cite{Goes2018} discuss applications of opacity in location privacy, and Wintenberg et al.~\cite{WintenbergBLO22} apply opacity in contact tracing.

  Given a system model, {\em opacity\/} asks whether a malicious passive observer (an intruder) with a complete knowledge of the structure of the model can ascertain some ``secret'' information of the system by making online incomplete observations of system's behaviour. The secret information is modeled either as a set of states or as a set of behaviours of the system. Based on the incomplete observations, the intruder estimates the state/behaviour of the system, and the system is opaque if for every secret state/behaviour of the system, there is a non-secret state/behaviour of the system that looks the same to the intruder.
  If the secret information is given as a set of states, we talk about state-based opacity~\cite{BryansKMR08,Bryans2005}, whereas if the secret information is given as a set of secret behaviours (a language), we talk about language-based opacity~\cite{Badouel2007,Dubreil2008}.
  Several notions of opacity have been discussed in the literature for systems modeled by automata and Petri nets; see Jacob et al.~\cite{JacobLF16} for an overview.
  In this paper, we focus on finite automata models and on the notions of opacity that we review in Section~\ref{OpacityDefs}.

  The fastest existing algorithms verifying the notions of opacity under consideration have exponential-time complexity with respect to the number of states of the automaton. In fact, the verification of opacity is a \PSpace{\nobreakdash-}complete problem~\cite{BalunMasopustIFACWC2020,BalunM21,JacobLF16}, and hence we may conclude that there are no polynomial-time algorithms deciding opacity unless $\PTime = \PSpace$.
  Although the assumption that $\PTime \neq \PSpace$ excludes the existence of polynomial-time algorithms, the question whether there is a significantly faster (i.\,e., sub-exponential-time) algorithm remains open.

  To achieve stronger lower bounds (although still conditional), we use the {\em Exponential Time Hypothesis\/} (ETH) and its strong version---the {\em Strong Exponential Time Hypothesis\/} (SETH). Both hypotheses were formulated by Impagliazzo and Paturi~\cite{ImpagliazzoP01} and are based on the observation that (so far) we were not able to find algorithms that would, in the worst case, solve SAT significantly faster than the algorithms trying all possible truth assignments.
  In particular, ETH states that 3-SAT cannot be solved in time $2^{o(n)}$ where $n$ is the number of variables. However, it admits algorithms solving 3-SAT in time $O(c^n)$ where $c<2$. In fact, the current fastest 3-SAT algorithm of Paturi et al.~\cite{PaturiPSZ05}, improved by Hertli~\cite{Hertli14}, runs in time $O^*(1.30704^n)$.
  With increasing $k$, the current fastest $k$-SAT algorithms are getting slower; for instance, the best $4$-SAT algorithm of Hertli~\cite{Hertli14} runs in time $O^*(1.46899^n)$.
  This observation motivated the formulation of SETH that claims that, for any constant $c<2$, there is always a sufficiently large $k$ such that $k$-SAT cannot be solved in time $O(c^n)$~\cite{ImpagliazzoP01}.
  Both hypotheses imply that the complexity classes \PTime and \NP are separated; moreover, SETH implies ETH.

  In this paper, we show that under (S)ETH, there are no significantly faster algorithms verifying opacity. In particular, we show that unless SETH fails, there is no algorithm that decides whether a given $n$-state automaton satisfies the considered notions of opacity and runs in time $O^*(2^{n/c})$, for any $c > 2$ (Theorem~\ref{k-so-seth2} and Corollary~\ref{cor-seth}).
  Since the number of symbols in the alphabet of our construction is unbounded and the standard binary encoding of symbols does not work under SETH, it is not clear whether this result also holds for automata with a fixed size (binary) alphabet.
  We partially explore this question under ETH rather
  than SETH. We show that unless ETH fails, there is no algorithm that decides whether a given $n$-state automaton (over a binary alphabet) satisfies the considered notions of opacity and runs in time $O^*(2^{o(n)})$.
  Our results are summarized in Table~\ref{table01}; for the complexity upper bounds, we refer the reader to the literature~\cite{BalunM21,Saboori2011}.

  As a by-product, we obtain a new conditional lower bound for deciding universality (and hence inclusion and equivalence) for nondeterministic automata (NFA): Unless SETH fails, there is no $c>2$ such that the universality of an $n$-state NFA can be decided in time $O^*(2^{n/c})$ (Corollary~\ref{NFAuniv}). This result strengthens the result of Fernau and Krebs~\cite{FernauK17} showing that if ETH is true, the universality of an $n$-state NFA cannot be decided in time $O^*(2^{o(n)})$.

  \begin{table*}\centering
    \ra{1.1}
    \begin{tabular}{@{}lccl@{}}
      \toprule
      & \multicolumn{2}{c}{Lower bound}  & Upper bound \\
      \cmidrule(lr){2-3}
      & $\Gamma$ not fixed & $|\Gamma|=2$ & \\
      \midrule
          LBO
                    & $O^*\left(2^{n/(2+\eps)}\right)$
                    & $O^*(2^{o(n)})$
                    & \quad$O^*(2^{n})$\\
          CSO
                    & $O^*\left(2^{n/(2+\eps)}\right)$
                    & $O^*(2^{o(n)})$
                    & \quad$O^*(2^{n})$\\
          ISO
                    & $O^*\left(2^{n/(2+\eps)}\right)$
                    & $O^*(2^{o(n)})$
                    & \quad$O^*(2^{n})$\\
          IFO
                    & $O^*\left(2^{n/(2+\eps)}\right)$
                    & $O^*(2^{o(n)})$
                    & $\begin{cases}
                        \mspace{2mu} O^*(2^{n}) & \text{ if } IQ_{NS}=I_{NS}\times F_{NS}\\
                        \mspace{2mu} O^*\bigl(2^{n^2}\bigr) & \text{ otherwise }
                      \end{cases}$
                      \\
          $k$-SO
                    & $O^*\left(2^{n/(2+\eps)}\right)$
                    & $O^*(2^{o(n)})$
                    & \quad$O^*(2^{n})$
                    \\
          INSO
                    & $O^*\left(2^{n/(2+\eps)}\right)$
                    & $O^*(2^{o(n)})$
                    & \quad$O^*(2^{n})$ \\
      \bottomrule
    \end{tabular}
    \medskip
    \caption{An overview of the algorithmic complexity of deciding opacity under the projection $P\colon\Sigma^* \to \Gamma^*$, where $n$ is the number of states of the automaton and $\eps>0$.}
    \label{table01}
  \end{table*}

\section{Preliminaries}
  We assume that the reader is familiar with automata theory~\cite{HopcroftU79}.
  For a set $S$, the cardinality of $S$ is denoted by $|S|$ and the power set of $S$ by $2^{S}$. If $S$ is a singleton, $S=\{x\}$, we often simply write $x$ instead of $\{x\}$. The set of all non-negative integers is denoted by $\mathbb{N}$.

  An alphabet $\Sigma$ is a finite nonempty set of symbols. A string over $\Sigma$ is a finite sequence of symbols from $\Sigma$. The set of all strings over $\Sigma$ is denoted by $\Sigma^*$; the empty string is denoted by $\varepsilon$. A language $L$ over $\Sigma$ is a subset of $\Sigma^*$.
  For a string $u \in \Sigma^*$, the length of $u$ is denoted by $|u|$.
  With every pair of alphabets $(\Sigma,\Gamma)$ with $\Gamma\subseteq \Sigma$, we associate the morphism $P\colon \Sigma^* \to \Gamma^*$ defined by $P(a) = \eps$, for $a\in \Sigma-\Gamma$, and $P(a)=a$, for $a\in \Gamma$; such morphisms are usually called {\em projections}. Intuitively, the action of the projection $P$ is to erase all symbols that do not belong to $\Gamma$; the symbols of $\Gamma$ are usually called {\em observable symbols of $\Sigma$ under the projection $P$}. We lift the projection $P$ from strings to languages in the usual way. The inverse projection of $P$ is the function $P^{-1}\colon \Gamma^* \to 2^{\Sigma^*}$ defined by $P^{-1}(w) = \{w' \in \Sigma^* \mid P(w')=w\}$.

  A {\em nondeterministic finite automaton\/} (NFA) is a structure $\A = (Q,\Sigma,\delta,I,F)$, where $Q$ is a finite  set of states, $\Sigma$ is an input alphabet, $I\subseteq Q$ is a set of initial states, $F \subseteq Q$ is a set of accepting states, and $\delta \colon Q\times\Sigma \to 2^Q$ is a transition function that can be extended to the domain $2^Q\times\Sigma^*$ by induction. If the accepting states are irrelevant, we omit them and simply write $\A=(Q,\Sigma,\delta,I)$.
  The language accepted by $\A$ from the states of $Q_0\subseteq Q$ by the states of $F_0\subseteq F$ is the set $L_m(\A,Q_0,F_0) = \{w\in \Sigma^* \mid \delta(Q_0,w)\cap F_0 \neq\emptyset\}$ and the language generated by $\A$ from the states of $Q_0$ is the set $L(\A,Q_0) = L_m(\A,Q_0,Q)$; in particular, the language accepted by $\A$ is $L_m(\A)=L_m(\A,I,F)$ and the language generated by $\A$ is $L(\A)=L(\A,I)$.
  The NFA $\A$ is {\em deterministic\/} (DFA) if $|I|=1$ and $|\delta(q,a)|\le 1$ for every state $q\in Q$ and every symbol $a \in \Sigma$.

  For an alphabet $\Gamma\subseteq \Sigma$, we define the {\em projected automaton\/} of $\A$, denoted by $P(\A)$, as the reachable part of a DFA obtained from $\A$ by replacing every transition $(q,a,r)$ with the transition $(q,P(a),r)$, followed by the standard subset construction~\cite{HopcroftU79}.

  We define the {\em configuration of $\A$} as the state of the projected automaton $P(\A)$ of $\A$.

  A {\em (Boolean) formula\/} consists of variables, symbols for logical connectives: conjunction, disjunction, negation; and parentheses. A {\em literal\/} is a variable or its negation. A {\em clause\/} is a disjunction of literals. A formula is in {\em conjunctive normal form\/} (CNF) if it is a conjunction of clauses. If each clause has at most $k$ literals, the formula is in $k$-CNF.
  A formula is {\em satisfiable\/} if there is an assignment of 1 and 0 to the variables evaluating the formula to 1. Given a $k\ge 3$ and a formula in $k$-CNF, the $k$-CNF {\em Boolean satisfiability problem\/} ($k$-SAT) is to decide whether the formula is satisfiable.
  If the formula in $k$-CNF has $n$ variables, enumerating all the $2^n$ possible truth assignments results in an $O(2^n n^k)$-time algorithm for $k$-SAT; the polynomial part $O(n^k)$ comes from checking up to $n^k$ clauses.
  We use the notation $O^*$ to hide polynomial factors, that is, $O^*(g(n))=O(g(n)\cdot poly(n))$.

  The exponential time hypothesis states that 3-SAT cannot be solved in sub-exponential time $2^{o(n)}$, where $n$ is the number of variables in the 3-CNF formula~\cite{ImpagliazzoP01}.

  \begin{hypo}[Exponential Time Hypothesis (ETH)]
    There is some $\eps > 0$ such that 3-SAT cannot be solved in time $O(2^{\eps n})$, where $n$ is the number of variables in the formula.
  \end{hypo}

  The strong ETH states that deciding $k$-SAT needs $O^*(2^n)$ time for large $k$~\cite{ImpagliazzoP01}.

  \begin{hypo}[Strong ETH (SETH)]
    For every $\eps > 0$, there is some $k\ge 3$ such that $k$-SAT cannot be solved in time  $O(2^{(1-\eps)n})$.
  \end{hypo}

\section{Opacity Definitions}\label{OpacityDefs}
  We now review the notions of opacity considered in this paper. We distinguish two types of opacity: those representing the secret by strings and those representing the secret by states.

  Language-based opacity is a property asking whether for every secret behaviour, there is a non-secret behaviour that is the same under a considered projection; in this case, an intruder cannot distinguish the secret behaviour from a non-secret behaviour.

  \begin{definition}
    An NFA $\A=(Q,\Sigma,\delta,I)$ is {\em language-based opaque} (LBO) with respect to disjoint languages $L_S,L_{NS} \subseteq L(\A)$, called secret and non-secret languages, respectively, and a projection $P\colon\Sigma^*\to \Gamma^*$ for $\Gamma\subseteq \Sigma$, if $L_S \subseteq P^{-1}P(L_{NS})$.

    The {\sc LBO\/} problem is to decide whether $\A$ is LBO with respect to $L_S$, $L_{NS}$, and $P$.
  \end{definition}

  This definition is general enough to capture other notions, such as {\em strong nondeterministic non-interference} or {\em non-deducibility on composition\/} of Best at al.~\cite{BestDG2010} and Busi and Gorrieri~\cite{BusiG09}, or {\em trace opacity\/} of Bryans et al.~\cite{BryansKMR08}. The secret and non-secret languages are often considered to be regular to ensure that the inclusion problem is decidable~\cite{AsveldN00}.

  State-based opacity hides the secret information into states. In this paper, we consider five notions of state-based opacity.
  Current-state opacity requires that an intruder cannot identify, at any instance of time, whether the system is currently in a secret state.

  \begin{definition}
    An NFA $\A=(Q,\Sigma,\delta,I)$ is {\em current-state opaque} (CSO) with respect to two disjoint sets $Q_S, Q_{NS} \subseteq Q$ of secret and non-secret states, respectively, and a projection $P\colon\Sigma^*\to \Gamma^*$ for $\Gamma\subseteq \Sigma$, if for every string $w\in \Sigma^*$ such that $\delta(I,w)\cap Q_S \neq \emptyset$, there exists a string $w'\in\Sigma^*$ such that $P(w)=P(w')$ and $\delta(I,w')\cap Q_{NS} \neq \emptyset$.

    The {\sc CSO\/} problem is to decide whether $\A$ is CSO with respect to $Q_S$, $Q_{NS}$, and $P$.
  \end{definition}

  Initial-state opacity requires that an intruder can never ascertain whether the computation started in a secret state.

  \begin{definition}
    An NFA $\A=(Q,\Sigma,\delta,I)$ is {\em initial-state opaque} (ISO) with respect to two disjoint sets $I_S,I_{NS}\subseteq I$ of secret and non-secret initial states, respectively, and a projection $P\colon \Sigma^*\rightarrow \Gamma^*$ for $\Gamma\subseteq \Sigma$, if for every $w \in L(\A,I_S)$, there exists $w' \in L(\A,I_{NS})$ such that $P(w) = P(w')$.

    The {\sc ISO\/} problem is to decide whether $\A$ is ISO with respect to $I_S$, $I_{NS}$, and $P$.
  \end{definition}

  Initial-and-final-state opacity~\cite{WuLafortune2013} generalizes both CSO and ISO. The secret is represented as a set of pairs of an initial state and of an accepting state. Therefore, ISO is a special case of initial-and-final-state opacity where the accepting states do not play a role, while CSO is a special case where the initial states do not play a role.

  \begin{definition}
    An NFA $\A=(Q,\Sigma,\delta,I,F)$ is {\em initial-and-final-state opaque} (IFO) with respect to two disjoint sets $IQ_S,IQ_{NS}\subseteq I \times F$ of secret and non-secret pairs of states, respectively, and a projection $P\colon \Sigma^*\rightarrow \Gamma^*$ for $\Gamma\subseteq \Sigma$, if for every secret pair $(q_0,q_f) \in IQ_S$ and every string $w \in L_m(\A,q_0,q_f)$, there exists a non-secret pair $(q_0',q_f') \in IQ_{NS}$ and a string $w' \in L_m(\A,q_0',q_f')$ such that $P(w) = P(w')$.

    The {\sc IFO\/} problem is to decide whether $\A$ is IFO with respect to $IQ_S$, $IQ_{NS}$, and $P$.
  \end{definition}

  The algorithmic time complexity of deciding {\sc IFO\/} is known to be $O^*(2^{n^2})$ in general, and $O(2^{2n})$ if $IQ_S=I_S\times F_S$ and $IQ_{NS}=I_{NS}\times F_{NS}$, for some $I_S,I_{NS}\subseteq I$ and $F_S,F_{NS}\subseteq F$~\cite{WuLafortune2013}. Our complexity in Table~\ref{table01} is based on the following observations.
  
  Consider an NFA $\A=(Q,\Sigma,\delta,I,F)$ and two sets $IQ_S, IQ_{NS}\subseteq I\times F$. The IFO property of $\A$ is unchanged if all pairs $(s,f_1),(s,f_2),\ldots,(s,f_k)$ with a common left component are replaced by a single pair of the form $(s,\{f_1,f_2,\ldots,f_k\})$. This reduces the number of pairs to be considered to $n$, where $n$ is the number of states of $\A$. For every pair $(s_i,F_i)\in I\times 2^F$, we define the language
  $
    L_{i} = L_m(\A,s_i,F_i)
  $
  and the languages
  \[
    L_S = \bigcup_{(s_i,F_i)\in IQ_S} L_i \quad\text{ and }\quad L_{NS} = \bigcup_{(s_i,F_i)\in IQ_{NS}} L_i\,.
  \]

  Then, deciding whether $\A$ is IFO with respect to $IQ_S$, $IQ_{NS}$, and $P$ is equivalent to deciding whether the inclusion $P(L_S)\subseteq P(L_{NS})$ holds true. Since both $L_S$ and $L_{NS}$ can be represented by NFAs consisting of at most $n$ copies of $\A$, they have $O(n^2)$ states. The inclusion $P(L_S)\subseteq P(L_{NS})$ of languages of two NFAs can be tested in time $O(n^2 2^{n^2}) = O^*(2^{n^2})$, which is a complexity upper bound that coincides with the bound of Saboori and Hadjicostis~\cite{SabooriH2013}, who used trellis automata.

  If $IQ_{NS}= I_{NS} \times F_{NS} \subseteq I\times F$, then the NFA for $L_{NS}$ coincides with $\A$ where the initial states are $I_{NS}$ and the final states are $F_{NS}$. In particular, this automaton has $n$ states, and therefore the inclusion $P(L_S)\subseteq P(L_{NS})$ can be tested in time $O(n^2 2^n)=O^*(2^n)$.

  \medskip
  The notion of $k$-step opacity generalizes CSO by requiring that the intruder cannot ascertain the secret in the current and $k$ subsequent states. By definition, CSO is equivalent to 0-step opacity. We use a slight generalisation of a definition of Saboori and Hadjicostis~\cite{SabooriH12a} that was formulated by~Balun and Masopust~\cite{BalunM21}.

  \begin{definition}
    An NFA $\A=(Q,\Sigma,\delta,I)$ is {\em $k$-step opaque} ($k$-SO), for a given $k\in\mathbb{N}\cup\{\infty\}$, with respect to two disjoint sets $Q_S,Q_{NS}\subseteq Q$ of secret and non-secret states, respectively, and a projection $P \colon \Sigma^* \to \Gamma^*$ for $\Gamma\subseteq \Sigma$, if for every string $st \in L(\A)$ such that $|P(t)| \leq k$ and $\delta(\delta(I, s)\cap Q_S, t) \neq \emptyset$, there exists a string $s't' \in L(\A)$ such that $P(s)=P(s')$, $P(t)=P(t')$, and $\delta (\delta(I,s')\cap Q_{NS}, t') \neq \emptyset$.

    The {\sc $k$-SO\/} problem is to decide whether $\A$ is $k$-SO with respect to $Q_S$, $Q_{NS}$, and $P$.
  \end{definition}

  A special case of $k$-SO for $k$ being infinity is called infinite-step opacity (INSO). These two notions are closely related for finite automata, because an $n$-state automaton is infinite-step opaque if and only if it is $(2^n-2)$-step opaque~\cite{Yin2017}.

  For the state-based opacity notions, we may assume, without loss of generality, that the projection $P$ is an identity; indeed, the transitions labeled by symbols from $\Sigma-\Gamma$ can be seen as $\eps$-transitions, and hence they can be removed by the classical algorithm eliminating $\eps$-transitions~\cite{HopcroftU79}. This algorithm does not change the number of states but can quadratically increase the number of transitions. In the sequel, we omit the projection if it is an identity.

\section{Lower Bounds under Strong Exponential Time Hypothesis}
  We now show that under the strong exponential time hypothesis, there is no algorithm deciding current-state opacity that would be significantly faster than the best known algorithm.

  \begin{theorem}\label{k-so-seth2}
    Unless SETH fails, there is no algorithm deciding whether a given $n$-state NFA is {\sc CSO\/} that runs in time $O^*(2^{n/(2+\eps)})$, for any $\eps > 0$.
  \end{theorem}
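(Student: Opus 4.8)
The plan is a fine-grained reduction from $k$-SAT to {\sc CSO}. Fix $k$ arbitrarily. From a $k$-CNF formula $\varphi$ over variables $x_1,\dots,x_n$ I would build, in polynomial time, an NFA $\A$ together with a projection $P$ and two disjoint sets $Q_S,Q_{NS}$ of states such that $\A$ is \emph{not} {\sc CSO} with respect to $Q_S$, $Q_{NS}$, and $P$ if and only if $\varphi$ is satisfiable, and such that $\A$ has only $N=2n+o(n)$ states. Granting such a reduction, suppose some algorithm decided {\sc CSO} in time $O^*(2^{N/(2+\eps)})$ for a fixed $\eps>0$. Running it on $\A$ would decide satisfiability of $\varphi$ in time $O^*\big(2^{(2n+o(n))/(2+\eps)}\big)=O^*\big(2^{(1-\frac{\eps}{2+\eps}+o(1))n}\big)$, hence in time $O(2^{(1-\eps')n})$ for a constant $\eps'=\eps'(\eps)>0$ independent of $k$ and all sufficiently large $n$. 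Since the reduction exists for \emph{every} $k$, this would contradict SETH. Thus the whole proof reduces to building the automaton.

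For the construction I would use the characterization that a state-based system is {\sc CSO} exactly when $P\big(L_m(\A,I,Q_S)\big)\subseteq P\big(L_m(\A,I,Q_{NS})\big)$, and arrange that, over the observable alphabet, $P(L_m(\A,I,Q_S))$ is the set of all $2^n$ encodings of truth assignments of $x_1,\dots,x_n$, while $P(L_m(\A,I,Q_{NS}))$ is the set of encodings of \emph{exactly the non-satisfying} assignments; the inclusion then fails iff $\varphi$ has a model. The exponential gap between $2^n$ assignments and only $2n$ states is produced by a gadget with two states $t_i,f_i$ per variable (``$x_i$ true''/``$x_i$ false''), designed so that the reachable configurations of the projected automaton $P(\A)$ --- i.e.\ the reachable subsets in the subset construction --- are in bijection with the truth assignments; non-observable symbols are used only to guess the bits of the assignment and to route into $Q_S$ or $Q_{NS}$. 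Since a non-satisfying assignment is precisely one falsifying all literals of some clause $C_j$, the non-secret branch nondeterministically picks such a $C_j$ and verifies its $\le k$ literal constraints. The decisive point is that the choice of $C_j$ and the literals to be verified are carried by the \emph{input word over an unbounded alphabet} (one symbol per literal, so that each read symbol is self-addressing: it names its variable and its value), not by control states; a single shared $O(k)$-state scanner then checks the required occurrences in increasing order of variable index. This is why the construction needs an unbounded alphabet and why it does not directly yield a fixed-alphabet bound --- that case is handled separately under ETH in Table~\ref{table01}.

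The main obstacle is to realize all of this within the budget of $2n+o(n)$ states. The two-states-per-variable encoding already consumes $2n$ states, so \emph{everything else} --- initial routing, clause selection, the literal scanner, and any well-formedness checks on the encoding --- must fit into $o(n)$ states and, crucially, must not cost a constant number of states \emph{per clause} nor a position counter of size $\Theta(n)$ \emph{per clause}; a self-addressing alphabet is exactly what avoids this, and making it work is where the base of the exponent ends up being $2$ rather than something larger. Once the automaton is in place, both directions of ``not {\sc CSO} $\iff$ $\varphi$ satisfiable'' and the $\eps$-bookkeeping above are routine. Finally, the remaining opacity notions of Table~\ref{table01}, as well as the universality by-product announced in the introduction, should follow by light modifications of the same automaton --- moving the secret/non-secret information between initial states, final states, or the language description --- rather than by genuinely new constructions.
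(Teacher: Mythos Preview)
Your high-level plan and the SETH arithmetic are exactly right and match the paper: build from a $k$-CNF on $n$ variables an instance of {\sc CSO} with $2n+O(1)$ states, then $O^*(2^{N/(2+\eps)})$ for {\sc CSO} gives $O^*(2^{(1-\eps')n})$ for $k$-SAT. The gap is in the construction itself.

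Your scanner idea does not close. You want an $O(k)$-state gadget that, having nondeterministically ``picked'' a clause $C_j$, scans the assignment word and checks that all literals of $C_j$ are falsified. But the scanner must somehow know \emph{which} literals belong to $C_j$. If that information sits in the control state, you pay $\Omega(m)$ states; if it is fed by unobservable symbols between the observable literal symbols, the scanner has to remember the announced variable until the matching observable symbol arrives, which costs $\Omega(n)$ states; if it rides on the observable symbols themselves, then the clause identity becomes part of the observation and you can no longer compare ``all assignments'' on the secret side against ``non-satisfying assignments for \emph{this} clause'' on the non-secret side. In each case the $o(n)$ budget is blown. Relatedly, your claim that the reachable configurations are in bijection with the $2^{n}$ assignments is asserted but not engineered; with a per-assignment encoding the natural subset after reading $(x_1,b_1)\cdots(x_n,b_n)$ always contains $n$ of the $t_i/f_i$ states, so it is never the bad ``only secret'' configuration, and the extra step that would collapse it is exactly where the clause check has to live.

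The paper's construction solves this differently and, I think, essentially. It does \emph{not} encode one assignment per word. Instead it uses a Zimin-word alphabet $\{a_1,\ldots,a_{n+1}\}\times C$ so that along any Zimin word the configuration of the $2n$ states $x_i^0,x_i^1$ runs through \emph{all} $2^n$ assignments like a binary counter (this is the content of Claims~\ref{claim3a}--\ref{claim3b}). The clause component of each symbol is then tested against the \emph{current} configuration: if some literal in the configuration satisfies that clause, a transition to the non-secret sink $q_{ns}$ fires. This is an OR over literals, which nondeterminism gives for free with no scanner at all. The adversary looking for a CSO violation chooses, at each step, a clause \emph{not} satisfied by the current assignment; she can do this for every step iff no assignment satisfies all clauses, i.e.\ iff $\varphi$ is unsatisfiable. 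Hence the paper gets ``CSO iff satisfiable'' (the opposite polarity from your proposal) with exactly $2n+2$ states. The Zimin/binary-counter trick is the missing idea; once you have it, your last paragraph about transferring the bound to the other opacity notions and to NFA universality is indeed routine and is what the paper does.
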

  \begin{proof}
    For a given formula $\varphi$ in $k$-CNF with $n$ variables $X=\{x_1,\ldots,x_n\}$ and $m$ clauses $C=\{c_1,\ldots,c_m\}$, we construct, in polynomial time, an instance of {\sc CSO\/} consisting of an NFA $\A_\varphi$ with $N=2n+2$ states and of sets of secret and non-secret states $Q_S$ and $Q_{NS}$, respectively, such that $\A_\varphi$ is CSO with respect to $Q_S$ and $Q_{NS}$ if and only if $\varphi$ is satisfiable.
    As a result, if there was an algorithm solving {\sc CSO\/} in time $O^*(2^{N/(2+\eps)})$, then there would be an algorithm solving $k$-SAT in time $O({\rm poly}(n))+O^*(2^{(2n+2)/(2+\eps)})=O^*(2^{(1-\delta)n})$, for $\delta = \eps/(2+\eps) > 0$, which contradicts SETH, and proves the theorem.

    Intuitively, we construct the NFA $\A_\varphi
    $ such that when $A_{\varphi}$ reads a string of a particular type (based on Zimin words), it is forced to examine all possible assignments to the variables $x_1,\dots,x_n$. If none of the assignments satisfies $\varphi$, then, after reading the whole string, the automaton $A_{\varphi}$ ends up in a configuration that contains only secret states, rendering thus $A_{\varphi}$ not CSO. On the other hand, when a satisfying assignment is encountered (or the string is not of the particular type), a non-secret state is permanently added to the configuration of $A_{\varphi}$, and hence $A_{\varphi}$ is CSO.

    \begin{figure}
      \centering
      \includegraphics[scale=.92]{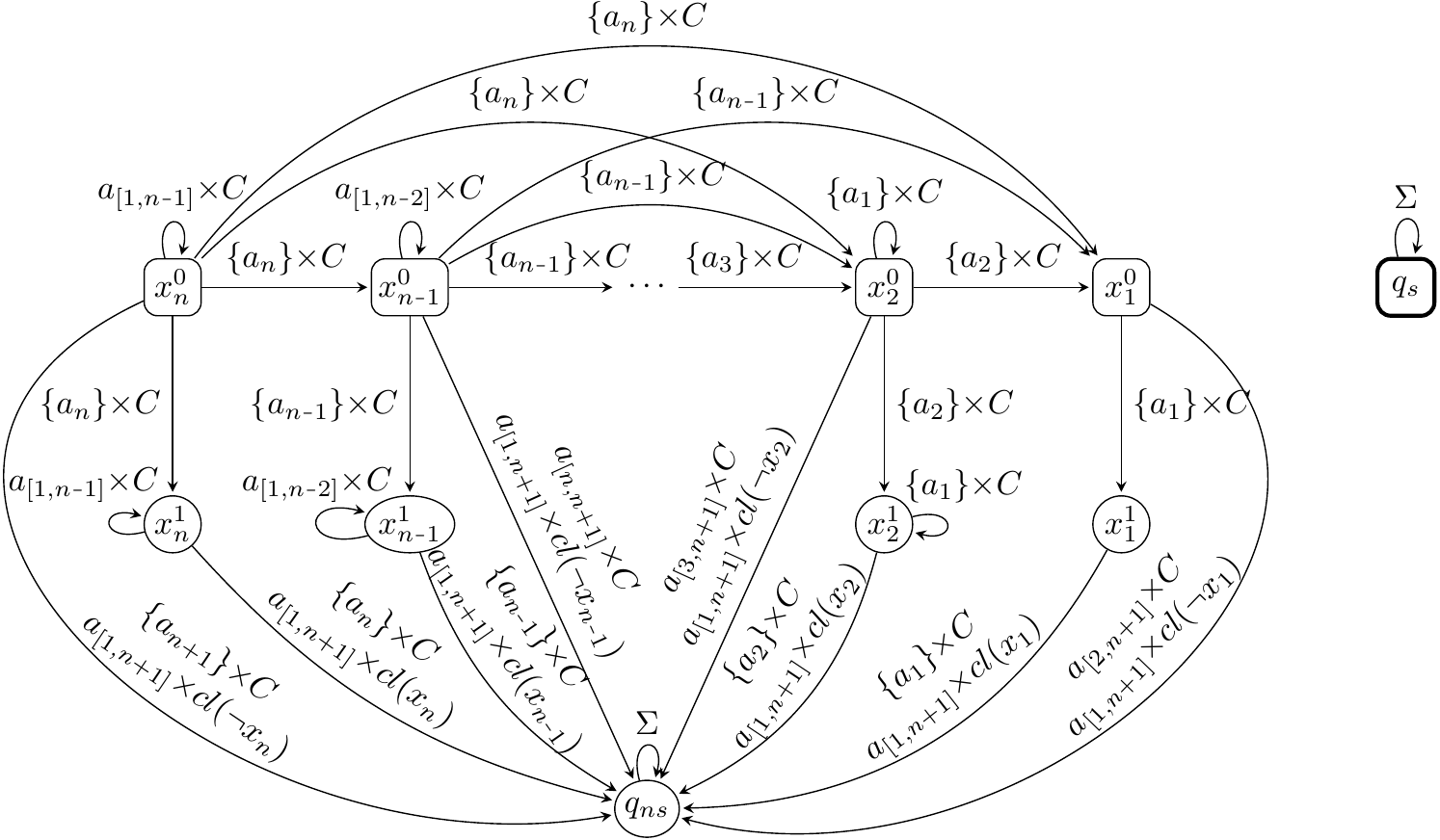}
      \caption{The NFA $A_{\varphi}$ of Theorem~\ref{k-so-seth2}, where the initial states are squared, the single secret state $q_s$ is in bold, and for positive integers $i\leq j$, $[i,j]=\{i,i+1,\ldots,j\}$ and $a_{[i,j]}=\{a_r\mid r\in [i,j]\}$.}
      \label{fig:kso-seth-cons}
    \end{figure}

    Formally, the NFA $\A_{\varphi}=(Q,\Sigma,\delta,I)$, where
      the set of states is $Q=\{q_s,q_{ns}\}\cup \{x_i^0, x_i^1 \mid x_i \in X\}$ with $x_i^r$ representing the assignment of $r\in\{0,1\}$ to the variable $x_i$,
      the alphabet $\Sigma = \Gamma = \{a_1,\ldots, a_{n+1}\} \times C$, that is, the projection $P$ is an identity, and
      the set of secret states is $Q_{S}=\{q_{s}\}$, that is, the state $q_{s}$ is the only secret state, the remaining states are non-secret.
    For an illustration of the construction, the reader may follow Example~\ref{ex01} together with the rest of the proof.

    Let $L$ be the set of literals of $\varphi$. We use the function $cl\colon L \rightarrow 2^C$ that assigns to a literal $\ell$ the set $cl(\ell) = \{c \in C \mid \ell\in c\}$ of clauses containing $\ell$, and define the transition function $\delta$ as follows, see Figure~\ref{fig:kso-seth-cons} for an illustration:
    \begin{outline}
      \1 The self-loops $(q_s,\sigma,q_s)$ and $(q_{ns},\sigma,q_{ns})$ belong to $\delta$ for every $\sigma\in\Sigma$;
      \1 For every state $x_i^0$ and every $c\in C$,
        \2 the transition $(x_i^0, (a_i,c), x_i^1)\in\delta$;
        \2 the self-loop $(x_i^0, (a_j,c), x_i^0) \in \delta$ for $1\le j\le i-1$;
        \2 the transition $(x_i^0, (a_i,c), x_j^0) \in \delta$ for $1\le j\le i-1$;
        \2 the transition $(x_i^0, (a_j,c), q_{ns})\in\delta$ for $i+1\le j\le n+1$;
        \2 the transition $(x_i^0, (a_j,c),q_{ns})\in\delta$ for $1\le j\le n+1$ and $c\in cl(\neg x_i)$;

      \1 For every state $x_i^1$ and $c\in C$,
        \2 the transition $(x_i^1, (a_i,c), q_{ns})\in\delta$;
        \2 the self-loop $(x_i^1, (a_j,c), x_i^1)\in\delta$, for $1\le j\le i-1$;
        \2 the transition $(x_i^1, (a_j,c), q_{ns})\in\delta$ for $1\le j\le n+1$ and $c\in cl(x_i)$.
    \end{outline}
    Finally, the set of initial states is
    $
      I = \{q_s\} \cup \{x_1^0, \ldots, x_n^0\}
    $,
    corresponding to the assignment of 0 to all variables of $\varphi$.

    We now define a language $W_\varphi = Z_n \cdot (\{a_{n+1}\}\times C)$, where $Z_n$ is a language over the alphabet $\{a_1,a_2,\ldots,a_{n}\}\times C$ recursively defined as follows:
    \[
      Z_1=\{a_1\}\times C \qquad \text{ and }\qquad
      Z_i = Z_{i-1} \cdot (\{a_{i}\}\times C) \cdot Z_{i-1}, \quad\text{for $1<i\le n$}\,.
    \]
    Such strings are known as {\em Zimin words\/} and it is well known that any string of $Z_n$ is of length $2^n-1$~\cite{sloan2} and that
    \begin{equation}\label{key}
      \parbox{0.9\textwidth}{the symbol on the $\ell$th position of any string from $Z_n$ is of the form $\{a_j\}\times C$, where $j-1$ is the number of trailing zeros in the binary representation of $\ell$ \cite{sloan}.}
    \end{equation}

    We finish the proof in a series of claims. The first claim shows that along any string of $Z_n$, the states $\{x_i^0, x_i^1 \mid x_i \in X\}$ of $\A_\varphi$ encode all possible assignments to the variables.

    \begin{restatable}{claim}{claimA}\label{claim3a}
      Let $\A_\varphi^X$ denote $\A_\varphi$ without the states $q_s$ and $q_{ns}$ and the corresponding transitions. For every $w \in Z_n$, after reading the prefix of $w$ of length $\ell \le 2^n-1$, the configuration of $\A_\varphi^X$ is $\{x_{n}^{r_{n}}, x_{n-1}^{r_{n-1}}, \ldots, x_1^{r_1}\}$, where $r_{n}r_{n-1}\cdots r_1$ represents $\ell$ in binary.
    \end{restatable}

    \begin{restatable}{claim}{claimB}\label{claim1}
      Every configuration of $\A_\varphi$ contains the secret state $q_s$.
    \end{restatable}

    By Claims~\ref{claim3a} and~\ref{claim1}, and because only the state $q_s$ itself is reachable from $q_s$ and only the state $q_{ns}$ itself is reachable from $q_{ns}$, we have the following observation specifying the computation of $\A_\varphi$ along the strings of $Z_n$.

    \begin{claim}\label{claim3b}
      After reading the prefix of $w\in Z_n$ of length $\ell \le 2^n-1$, the configuration of $\A_\varphi$ is either $\{x_{n}^{r_{n}}, x_{n-1}^{r_{n-1}}, \ldots, x_1^{r_1}\}\cup\{q_s\}$ or $\{x_{n}^{r_{n}}, x_{n-1}^{r_{n-1}}, \ldots, x_1^{r_1}\}\cup\{q_s,q_{ns}\}$, where $r_{n}r_{n-1}\cdots r_1$ represents $\ell$ in binary.
    \claimqedhere{}\end{claim}

    If there is a satisfying assignment, then the non-secret state $q_{ns}$ is reached by $\A_\varphi$.

    \begin{restatable}{claim}{claimC}\label{claim3c}
      For every prefix $w$ of a string in $Z_n$, if the configuration of $\A_\varphi$ after reading $w$ satisfies $\varphi$, then after reading any further symbol in $\Sigma$, the configuration of $\A_\varphi$ contains $q_{ns}$.
    \end{restatable}

    We now show that if $\varphi$ is satisfiable, then $\A_\varphi$ is CSO. To this end, we consider an arbitrary string $w$ over $\Sigma$, and we denote by $u$ the longest prefix of $w$ that is a prefix of a string in $Z_n$. Let $\ell$ be the minimal number such that its binary representation $r_{n}r_{n-1}\cdots r_1$ is a satisfying assignment to the variables of $\varphi$.
    
    If $\ell \leq |u|$, then, by Claim~\ref{claim3b}, the configuration of $\A_\varphi$ after reading any prefix of $u$ of length $\ell' \le \ell$ contains non-secret states $x_{n}^{r_{n}}, x_{n-1}^{r_{n-1}}, \ldots, x_1^{r_1}$, where $r_{n}r_{n-1}\cdots r_1$ represents $\ell'$ in binary, and, by Claim~\ref{claim3c}, the ($\ell+1$)st symbol of $w$ moves $\A_\varphi$ to a configuration that contains $q_{ns}$; that is, $\A_\varphi$ is CSO.

    If $\ell > |u|$, we have $w=u(a_s,c)v$ for $(a_s,c) \in \Sigma$ and $v \in \Sigma^*$. Because $\varphi$ is satisfiable, we have $|u| < 2^n - 1$. By Claim~\ref{claim3b}, the configuration of $\A_\varphi$ after reading $u$ contains $x_n^{r_n},\ldots,x_1^{r_1}$, where $r_{n}r_{n-1}\cdots r_1$ represents $|u|$ in binary. Let $r_t$ be the rightmost zero of $r_{n}r_{n-1}\cdots r_1$, that is, $r_{n}r_{n-1}\cdots r_1 = r_{n}r_{n-1}\cdots r_{t+1} 0 1 \cdots 1$. Then, $|u|+1$ is $r_{n}r_{n-1}\cdots r_{t+1} 1 0 \cdots 0$ in binary and, by~\eqref{key}, the symbol $(a_s,c) \notin \{a_{t}\}\times C$.  However, if $s<t$, then $x_s^1$ goes to state $q_{ns}$ under $\{a_s\}\times C$, while if $s>t$, then $x_t^0$ goes to state $q_{ns}$ under $\{a_s\}\times C$. In both cases, the non-secret state $q_{ns}$ is in the next configuration of $\A_\varphi$, and hence $\A_\varphi$ is CSO.
    
    \medskip
    To prove that if $\varphi$ is not satisfiable, then $\A_\varphi$ is not CSO, we use the following claim.

    \begin{restatable}{claim}{claimD}\label{claim3d}
      If $\varphi$ is not satisfiable, there is a string $w_\varphi \in Z_n$ such that the configuration of $\A_\varphi$ after reading $w_\varphi$ is $\{x_n^1,x_{n-1}^1,\dots,x_1^1\} \cup \{q_s\}$.
    \end{restatable}

    We now show that if $\varphi$ is not satisfiable, then $\A_\varphi$ is not CSO. To this end, we consider the string $w_\varphi$ constructed in Claim~\ref{claim3d}, which we extend to a string from $W_\varphi = Z_n \cdot (\{a_{n+1}\}\times C)$ by adding a symbol of the form $\{a_{n+1}\}\times C$.
    Since $\varphi$ is not satisfiable, there is a clause $c\in C$ that is not satisfied by the assignment of 1 to the variables; that is, there is $c \notin \bigcup_{i=1}^{n} cl(x_i)$.
    Then, the string $w_\varphi (a_{n+1},c)$ moves the automaton $\A_\varphi$ from the configuration $\{x_{n}^{1}, x_{n-1}^{1}, \ldots, x_1^{1}\}\cup \{q_s\}$ to the configuration $\{q_s\}$, and hence $\A_\varphi$ is not CSO.
  \qed\end{proof}

  \begin{figure}
    \centering
    \includegraphics[scale=1]{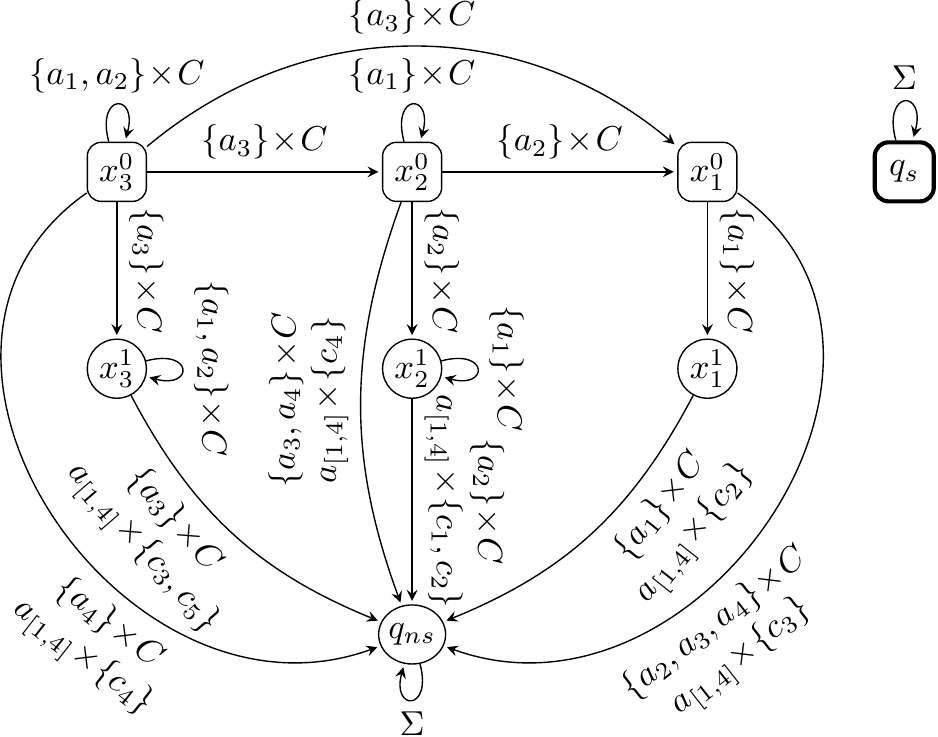}
    \caption{The NFA $A_{\varphi}$ illustrating Theorem~\ref{k-so-seth2}; the initial states are squared.}
    \label{fig:ksat-cso-example-2}
  \end{figure}

  We now illustrate the construction.
  \begin{example}\label{ex01}
    For simplicity, we consider a 2-CNF formula
    \[
      \varphi = (x_2\lor x_2)\wedge (x_1\lor x_2)\wedge (\neg x_1\lor x_3)\wedge (\neg x_2\lor \neg x_3)\wedge (x_3)
    \]
    with three variables $x_1$, $x_2$, $x_3$ and five clauses $c_1=\{x_2,x_2\}$, $c_2=\{x_1, x_2\}$, $c_3=\{\neg x_1,x_3\}$, $c_4=\{\neg x_2, \neg x_3\}$, and $c_5=\{x_3\}$. The automaton $\A_\varphi=(Q,\Sigma, \delta, \{q_s, x_1^0, x_2^0, x_3^0\})$ is depicted in Figure~\ref{fig:ksat-cso-example-2}, where $Q=\{q_{s},q_{ns}\}\cup \{x_1^0, x_1^1, x_2^0, x_2^1, x_3^0, x_3^1\}$, $\Sigma = \Gamma = \{a_1,a_2,a_3,a_4\}\times \{c_1,c_2,c_3,c_4,c_5\}$, and $q_s$ is the only secret state.
    Since $\varphi$ is not satisfiable, $\A_\varphi$ is not CSO; indeed, the string $w = (a_1,c_1)(a_2,c_2)(a_1,c_5)(a_3,c_3)(a_1,c_1)(a_2,c_1)(a_1,c_4)(a_4,c_4)$ moves $\A_\varphi$ to the configuration $\{q_{s}\}$ consisting solely of the secret state, cf.~Figure~\ref{fig:ksat-cso-example-obs2-a} depicting the reachable configurations of $\A_\varphi$.

    On the other hand, if we consider the formula $\varphi' = c_1\wedge c_2\wedge c_3 \wedge c_4$, then $\varphi'$ is satisfiable, and hence the NFA $\A_{\varphi'}$ obtained from $\A_\varphi$ by removing all transitions under symbols containing $c_5$, is CSO; it is visible from the reachable configurations of $\A_{\varphi'}$ depicted in~Figure~\ref{fig:ksat-cso-example-obs2}.
  \end{example}

  \begin{figure}
    \centering
    \includegraphics[scale=1]{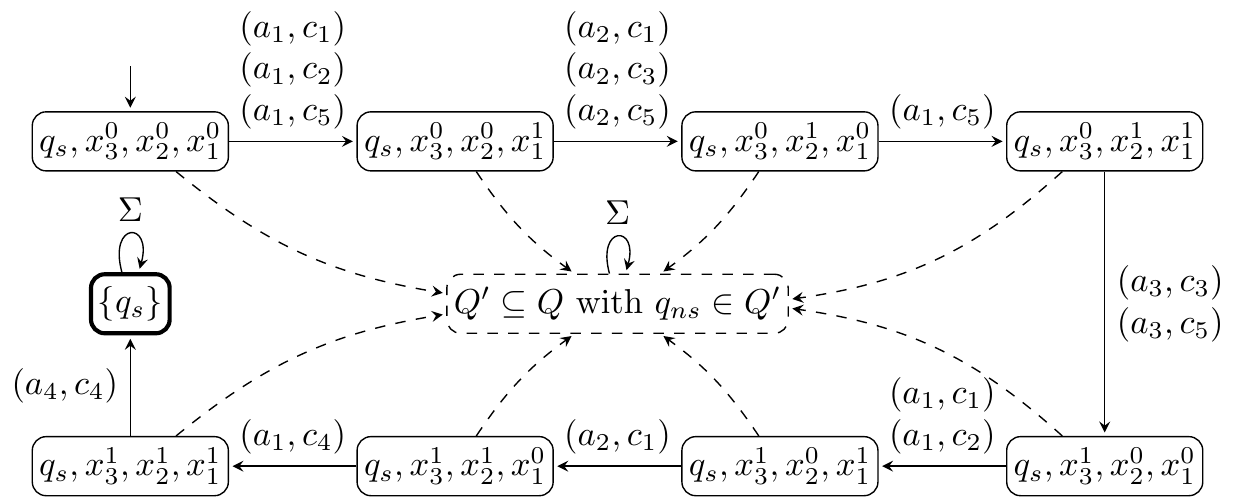}
    \caption{The configurations of $A_{\varphi}$---all undefined transitions go to the dashed middle state.}
    \label{fig:ksat-cso-example-obs2-a}
  \end{figure}

  \begin{figure}
    \centering
    \includegraphics[scale=1]{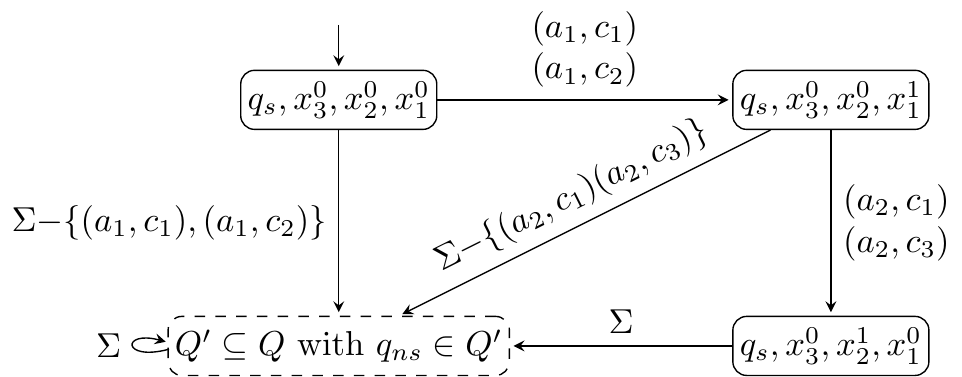}
    \caption{The configurations of $A_{\varphi'}$.}
    \label{fig:ksat-cso-example-obs2}
  \end{figure}

  The considered problems are all \PSpace-complete, and hence reducible to each other in polynomial time. However, this fact does not provide us with much information about the reductions. Even though some particular reductions have been discussed in the literature by Wu and Lafortune~\cite{WuLafortune2013} and Balun and Masopust~\cite{BalunM21,BalunM22}, they are in most cases not suitable to prove lower bounds.

  We now discuss the case of other types of opacity.

  \begin{corollary}\label{cor-seth}
    Unless SETH fails, there is no algorithm deciding if a given $n$-state NFA is LBO/ISO/IFO/$k$-SO/INSO that runs in time $O^*(2^{n/(2+\eps)})$, for any $\eps > 0$.
  \end{corollary}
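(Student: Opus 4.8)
The plan is to obtain all five bounds from the single automaton $\A_\varphi$ of the proof of Theorem~\ref{k-so-seth2}, by reusing it almost verbatim and only changing the ``secret'' specification; for LBO one further trivial gadget is added. The key observation is that the CSO instance built there is equivalent to the universality-type question whether $\Sigma^*\subseteq L_m(\A_\varphi,I,Q_{NS})$: since $q_s\in I$ self-loops on all of $\Sigma$, the secret state belongs to every configuration, so $\A_\varphi$ is CSO precisely when every string also reaches a non-secret state, which by the proof happens iff $\varphi$ is satisfiable. Because every reduction below keeps the instance on $2n+O(1)$ states and $2/(2+\eps)<1$ for every $\eps>0$, an $O^*(2^{m/(2+\eps)})$ algorithm on an instance of size $m=2n+O(1)$ would yield a $k$-SAT algorithm running in time $O^*(2^{(1-\delta)n})$ with $\delta=\eps/(2+\eps)>0$, contradicting SETH exactly as in Theorem~\ref{k-so-seth2}.

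For $k$-SO with any fixed $k$, and for INSO, I would keep $\A_\varphi$, $Q_S$ and $Q_{NS}$ unchanged. Since $k$-step opacity is monotone (a larger $k$, and $k=\infty$, only strengthen the requirement) and $0$-SO is CSO, the proof that unsatisfiability of $\varphi$ makes $\A_\varphi$ fail CSO already shows that $\A_\varphi$ is then neither $k$-SO nor INSO. For the other direction I would strengthen ``$\varphi$ satisfiable $\Rightarrow$ $\A_\varphi$ is CSO'' to ``$\varphi$ satisfiable $\Rightarrow$ $\A_\varphi$ is INSO'', which, for the identity projection, amounts to showing $\delta(\delta(I,s)\cap Q_{NS},t)\neq\emptyset$ for every decomposition $w=st$. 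This follows from the same case distinction as in Theorem~\ref{k-so-seth2}: either $q_{ns}\in\delta(I,s)$, and then it stays because it is absorbing; or $s$ is a Zimin prefix not yet past a satisfying configuration, so the states $x_i^{r_i}$ lie in $\delta(I,s)$ and reading $t$ either keeps them alive along a Zimin prefix, reaches a satisfying configuration, or introduces $q_{ns}$ at the first deviation. Hence $\A_\varphi$, still with its $2n+2$ states, is INSO, therefore $k$-SO for all $k$, therefore CSO.

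For IFO I would use that CSO is the special case of IFO in which the initial states play no role: take $\A_\varphi$ with all states accepting, $IQ_S=I\times\{q_s\}$ and $IQ_{NS}=I\times Q_{NS}$ (these are disjoint). The IFO condition then says $P\bigl(L_m(\A_\varphi,I,\{q_s\})\bigr)\subseteq P\bigl(L_m(\A_\varphi,I,Q_{NS})\bigr)$, i.e.\ exactly that $\A_\varphi$ is CSO. For ISO I would instead designate $I_S=\{q_s\}$ and $I_{NS}=\{x_1^0,\dots,x_n^0\}$. Then $L(\A_\varphi,I_S)=\Sigma^*$ because $q_s$ self-loops on $\Sigma$, so $\A_\varphi$ is ISO iff $L(\A_\varphi,\{x_1^0,\dots,x_n^0\})=\Sigma^*$. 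If $\varphi$ is unsatisfiable, the string $w_\varphi(a_{n+1},c)$ from Claim~\ref{claim3d} drives the configuration reached from $\{x_1^0,\dots,x_n^0\}$ to the empty set, so this language is not $\Sigma^*$. If $\varphi$ is satisfiable, a short invariant---every configuration reachable from $\{x_1^0,\dots,x_n^0\}$ without $q_{ns}$ is a ``counter'' configuration $\{x_n^{r_n},\dots,x_1^{r_1}\}$, and once $q_{ns}$ appears it never leaves---shows the configuration is never empty (the only counter configuration that could die, the all-ones one, is then itself a satisfying assignment, so every symbol adds $q_{ns}$ to it), so the language is $\Sigma^*$. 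Both constructions have $2n+2$ states.

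Finally, for LBO I would work over $\Sigma\cup\{d\}$ with a fresh symbol $d$ and the projection $P$ erasing $d$ and fixing $\Sigma$, take as the underlying NFA $\A_\varphi$ extended by a $d$-self-loop at every state (still $2n+2$ states), and set $L_S=\Sigma^*d$ (a $2$-state NFA) and $L_{NS}=L_m(\A,I,Q_{NS})$. These languages are disjoint and contained in $L(\A)$, and $L_S\subseteq P^{-1}P(L_{NS})$ iff $\Sigma^*=P(L_S)\subseteq P(L_{NS})=L_m(\A_\varphi,I,Q_{NS})$ iff $\varphi$ is satisfiable. Composing each of these reductions with the $k$-SAT reduction of Theorem~\ref{k-so-seth2} gives the claimed bounds. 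I expect the only real difficulty to be twofold: keeping every blow-up additive rather than a constant factor above~$2$ (the generic inter-reductions implied by \PSpace-completeness do not achieve this), which is why everything is built on the one automaton $\A_\varphi$; and verifying the ``satisfiable $\Rightarrow$ INSO'' strengthening together with the ``counter configuration'' invariant for ISO, the two places where reasoning beyond Theorem~\ref{k-so-seth2} is needed.
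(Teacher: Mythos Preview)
Your proposal is correct, but in places it works harder than necessary compared to the paper. The simplification you miss is that in $\A_\varphi$ the only transitions out of $q_s$ are the self-loops on all of $\Sigma$. This single structural fact gives the paper the equivalences CSO $\Leftrightarrow$ $k$-SO $\Leftrightarrow$ INSO for free: since $\delta(\{q_s\},t)=\{q_s\}\subseteq Q_S$, any non-secret state in $\delta(I,st)$ must already come from $\delta(\delta(I,s)\cap Q_{NS},t)$, so CSO implies INSO directly. Likewise the paper gets CSO $\Leftrightarrow$ ISO in one line: $\A_\varphi$ is not CSO iff some $w$ drives the full configuration to $\{q_s\}$ iff (as $q_s$ is disconnected from the rest) $w\notin L(\A_\varphi,I_{NS})$ iff ISO with $I_S=\{q_s\}$, $I_{NS}=\{x_1^0,\ldots,x_n^0\}$ fails. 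The paper therefore never reopens the satisfiability analysis of Theorem~\ref{k-so-seth2}; you, by contrast, re-establish the ``counter configuration'' invariant and ``satisfiable $\Rightarrow$ INSO'' from scratch. Your arguments are sound (in particular, your claim that the all-ones configuration, if reached without $q_{ns}$, must itself be satisfying is correct, since pure counter configurations can only be visited in the order $0,1,2,\ldots$), but they duplicate work already packaged inside the CSO result.

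For IFO your choice $IQ_S=I\times\{q_s\}$, $IQ_{NS}=I\times Q_{NS}$ works just as well as the paper's $IQ_S=\{(q_s,q_s)\}$, $IQ_{NS}=I_{NS}\times Q$. For LBO your extra symbol $d$ is unnecessary: the paper simply takes $L_S=L_m(\A_\varphi,q_s,q_s)=\Sigma^*$ and $L_{NS}=L_m(\A_\varphi,I_{NS},Q\setminus\{q_s\})$, observing that the two sub-automata are state-disjoint so the instance is still $\A_\varphi$ itself. One caveat: your claim that $L_S=\Sigma^*d$ and $L_{NS}=L_m(\A,I,Q_{NS})$ are disjoint is false once you add $d$-self-loops everywhere (e.g.\ $d\in L_S\cap L_{NS}$); if you want genuine disjointness, keep $L_{NS}$ over $\Sigma$ only. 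The paper's $L_S=\Sigma^*$ is likewise not disjoint from its $L_{NS}$, so neither argument fully respects the disjointness clause in the LBO definition; this does not affect the lower bound, but is worth noting.
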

  \begin{proof}
    Consider the instance of CSO given by the NFA $\A$ and the sets $Q_{S}$ and $Q_{NS}$ constructed in the proof of Theorem~\ref{k-so-seth2}. Then, $\A$ is {\sc CSO\/} with respect to $Q_S=\{q_s\}$ and $Q_{NS}=Q-Q_S$ if and only if $\A$ is LBO with respect to $L_S=L_m(\A,q_s,q_s)=\Sigma^*$ and $L_{NS}=L_m(\A,\{x_1^0, \ldots, x_n^0\},Q-\{q_s\})$. Since the parts of $\A$ corresponding to languages $L_S$ and $L_{NS}$ are disjoint, the instance of {\sc LBO\/} may be encoded directly into $\A$ by defining the corresponding states accepting the languages
    $L_S$ and $L_{NS}$. Hence, the instance of {\em LBO\/} is of the same size as the instance of {\sc CSO}. Therefore, if we solved the instance of {\sc LBO\/} in time $O^*(2^{(1-\delta)n})$, we would also solve the instance of {\sc CSO\/} in time $O^*(2^{(1-\delta)n})$.

    Since there is no transition in $\A$ from the sole secret state $q_s$ to another state, the NFA $\A$ is CSO if and only if $\A$ is $k$-SO, for any $k\in\mathbb{N}\cup\{\infty\}$, and hence the result holds for {\sc $k$-SO\/} as well as for {\sc INSO}.
    
    Furthermore, the NFA $\A$ is {\sc CSO\/} with respect to the sets $Q_{S}$ and $Q_{NS}$ if and only if $\A$ is ISO with respect to the secret initial state $I_S=\{q_s\}$ and non-secret initial states $I_{NS}=\{x_1^0, \ldots, x_n^0\}$. Indeed, since $L(\A,I_S)=\Sigma^*$, the NFA $\A$ is not CSO if and only if there is a string $w\in\Sigma^*$ that moves $\A$ from the initial configuration $I_{NS}$ to the configuration $\emptyset$, which is if and only if $\A$ is not ISO. As a result, solving {\sc ISO\/} in time $O^*(2^{(1-\delta)n})$ would solve {\sc CSO} in time $O^*(2^{(1-\delta)n})$.

    Finally, if all states of $\A$ are accepting, then $\A$ is ISO with respect to $I_S=\{q_s\}$ and $I_{NS}=\{x_1^0, \ldots, x_n^0\}$ if and only if $\A$ is IFO with respect to $IQ_S=\{(q_s,q_s)\}$ and $IQ_{NS}=I_{NS}\times Q$; hence, solving {\sc IFO} in time $O^*(2^{(1-\delta)n})$ would solve {\sc ISO} in time $O^*(2^{(1-\delta)n})$.
  \end{proof}

\section{Lower Bounds under Exponential Time Hypothesis}
  The number of symbols in the NFA constructed in Theorem~\ref{k-so-seth2} depends on the number of clauses in the instance of SAT. Since the standard binary encoding of symbols does not work under SETH, it is an open problem whether the results of Theorem~\ref{k-so-seth2} and Corollary~\ref{cor-seth} also hold for a fixed-sized alphabet.

  Although we do not answer this question, we provide a lower bound for NFAs over a binary alphabet under ETH. Namely, we show that there is no algorithm solving the considered notions of opacity for such $n$-state NFAs that runs in time $O^*(2^{o(n)})$.
  We obtain the result by adjusting the construction of Fernau and Krebs~\cite{FernauK17}, who showed that there is no algorithm solving the universality problem for $n$-state NFAs over a binary alphabet that runs in time $O^*(2^{o(n)})$ unless ETH fails, and by using the observation of Cassez et al.~\cite{Cassez2012} that universality can be reduced to opacity.

  \begin{theorem}\label{k-so-eth}
    Unless ETH fails, there is no algorithm deciding whether a given $n$-state NFA (over a binary alphabet) is CSO that runs in time $O^*(2^{o(n)})$.
  \end{theorem}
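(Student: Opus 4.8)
The plan is to obtain the bound by composing two reductions: the ETH-hardness of NFA universality over a binary alphabet due to Fernau and Krebs~\cite{FernauK17}, and a near-size-preserving reduction from universality to current-state opacity in the spirit of Cassez et al.~\cite{Cassez2012}. Recall that~\cite{FernauK17} gives, unless ETH fails, no $O^*(2^{o(n)})$-time algorithm deciding whether an $n$-state NFA over $\{0,1\}$ accepts all of $\{0,1\}^*$. Hence it suffices to exhibit a polynomial-time reduction that turns an $n$-state binary NFA $\mathcal B=(Q,\{0,1\},\delta,I,F)$ into an $(n+O(1))$-state binary NFA together with sets of secret and non-secret states, which is CSO if and only if $\mathcal B$ is universal.

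First I would describe the gadget. Add a single fresh state $q_s$ with self-loops on both symbols, declare $q_s$ an extra initial state, keep all transitions inside $Q$ unchanged, and set $Q_S=\{q_s\}$ and $Q_{NS}=F$. Since for state-based opacity the projection may be taken to be the identity, I would then verify the key equivalence: every string reaches $q_s$, so the ``a secret state is reached'' premise of CSO holds for every $w$; and since $q_s$ has no transition other than its self-loops and is not reachable from $Q$, the configuration of the new automaton intersected with $Q$ equals $\delta(I,w)$, whence $\delta'(I\cup\{q_s\},w)\cap F=\delta(I,w)\cap F$. Therefore the new automaton is CSO exactly when $\delta(I,w)\cap F\neq\emptyset$ for every $w$, i.e.\ exactly when $\mathcal B$ is universal. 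The construction clearly runs in linear time, keeps the alphabet binary, and adds only one state.

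Finally I would assemble the contradiction: an $O^*(2^{o(m)})$-time algorithm for CSO on $m$-state binary NFAs, composed with this reduction, would decide universality of an $n$-state binary NFA in time $O^*(2^{o(n+1)})=O^*(2^{o(n)})$, contradicting~\cite{FernauK17} and hence ETH. I expect no serious obstacle: the heavy lifting is already done in the Fernau--Krebs construction, and the only thing that genuinely needs checking is the correctness of the one-state opacity gadget, namely that ``CSO'' of the padded automaton coincides with ``universal'' for $\mathcal B$ --- in particular, that making $q_s$ secret and permanently reachable, rather than partitioning all of $Q\cup\{q_s\}$ into secret and non-secret, captures universality and not merely the weaker property that every string keeps some run of $\mathcal B$ alive. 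Once Theorem~\ref{k-so-eth} is in place, the analogue of Corollary~\ref{cor-seth} should transfer the bound to LBO, ISO, IFO, $k$-SO, and INSO with the same automaton.
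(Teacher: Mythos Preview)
Your proposal is correct and follows essentially the same two-step plan the paper announces in the paragraph preceding the proof: invoke the Fernau--Krebs ETH lower bound for NFA universality and compose it with a Cassez-style reduction from universality to CSO. The only difference is presentational: the paper unfolds the Fernau--Krebs \textsc{3-Coloring} construction inline and threads a fresh secret state $s$ into that automaton (reached exactly after $n$ symbols, with $Q_{NS}=Q\setminus\{s\}$), whereas you treat the universality lower bound as a black box and bolt on a single always-reachable initial secret state with $Q_{NS}=F$; your version is slightly more modular and adds only one state, while the paper's is more self-contained. Your observation that $Q_{NS}$ must be $F$ rather than all of $Q$ (so that CSO captures universality and not mere prefix-totality) is exactly the point that makes the gadget work, and it is handled in the paper's construction by the fact that at length $n$ the only reachable non-secret state is $f$.
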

  \begin{proof}
    A 3-coloring of a graph $G=(V,E)$ is a function $\mu\colon V \to \{a,b,c\}$. The coloring is proper if $\mu(u) \neq \mu(v)$ whenever $uv \in E$. The {\sc 3-Coloring} problem is to decide, given a graph $G$, whether there is a proper 3-coloring of $G$.

    For a graph $G$ with $n$ vertices, $V=\{v_1,v_2,\ldots,v_n\}$, and $m$ edges, we construct an NFA $\A=(Q,\Sigma,\delta,I)$, where the states are $Q = \{s,f\} \cup \{q_1,\ldots,q_n\} \cup \{x_1,\ldots,x_{n-1} \mid x\in\{a, b, c\} \}$, the alphabet is $\Sigma = \Gamma = \{a,b,c\}$, the initial state is $I=\{q_1\}$, the secret state is $Q_S=\{s\}$, and the non-secret states are $Q_{NS}=Q-Q_S$.
    We define the transition function $\delta$ as shown in Figure~\ref{reduction01}, and further extended it by adding three transitions $(q_i,a,{a}_{j-i})$, $(q_i,b,{b}_{j-i})$, and $(q_i,c,{c}_{j-i})$ for every edge $v_iv_j\in E$ with $i<j$.

    \begin{figure}
      \centering
      \includegraphics[scale=1]{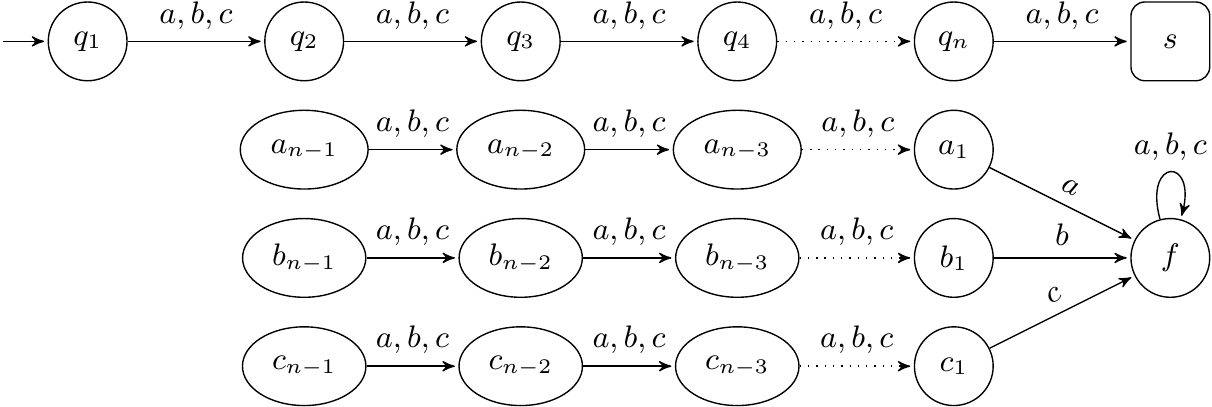}
        \caption{The main part of the NFA $\A$ resulting from the reduction of Theorem~\ref{k-so-eth} without the transitions corresponding to the edges of $G$. The secret state is state $s$.}
        \label{reduction01}
      \end{figure}

    Intuitively, the coloring of $G$ is encoded as a string $w=c_1\cdots c_n$ of length $n$, where $c_i$ is the color of vertex $i$, and $\A$ is CSO with respect to $\{s\}$ and $Q-\{s\}$ if and only if the non-secret state $f$ is reachable under $w$; indeed, the secret state $s$ is reachable under every string of length $n$.

    Fernau and Krebs~\cite{FernauK17} showed that $\A$ ends up in state $f$ under $w$ if and only if $w$ encodes a coloring that is not proper.
    Therefore, if $w$ is a proper 3-coloring of $G$, then $w$ does not move $\A$ to state $f$; that is, only the secret state $s$ is reached under $w$, and hence $\A$ is not CSO with respect to $\{s\}$ and $Q-\{s\}$.
    On the other hand, if $G$ does not have a proper 3-coloring, then every string of length $n$ moves $\A$ to both secret state $s$ and non-secret state $f$; that is, $\A$ is CSO with respect to $\{s\}$ and $Q-\{s\}$.

    If $G$ has $n$ vertices and $m$ edges, then $\A$ has $N=4n-1$ states and $M=12n+3m-12$ transitions. If there was an $O^*(2^{o(N)})$-time algorithm deciding {\sc CSO}, we could reduce the instance of {\sc 3-Coloring} to an instance of {\sc CSO} in time $O(N+M)$, and solve {\sc CSO} in time $O^*(2^{o(N)})$. Altogether, we could solve {\sc 3-Coloring} in time $O(N+M) + O^*(2^{o(N)}) = O^*(2^{o(n)})$, which contradicts ETH.
  \end{proof}

  Similarly to the discussion in the previous section, to prove the lower bound for the other notions of opacity, it seems natural to combine the construction of the previous proof with the existing reductions among the notions~\cite{BalunM21,BalunM22,WuLafortune2013}. However, most of the reductions result in too large, though polynomial, instances, and hence they are not suitable for our purposes. Therefore, new reductions are needed.

 \begin{corollary}\label{cor-k-so-eth}
    Unless ETH fails, there is no algorithm deciding if a given $n$-state NFA (over a binary alphabet) is LBO/ISO/IFO/$k$-SO/INSO that runs in time $O^*(2^{o(n)})$.
  \end{corollary}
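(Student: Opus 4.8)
The plan is to take the CSO instance $\A$ (with unique secret state $s$) constructed in the proof of Theorem~\ref{k-so-eth} and feed it into exactly the chain of reductions used in the proof of Corollary~\ref{cor-seth}. The point is that each of those reductions leaves the alphabet untouched (in particular, binary) and adds at most a constant number of states, so the resulting instance of LBO/ISO/IFO/$k$-SO/INSO still has $\Theta(n)$ states, where $n=|V|$ is the number of vertices of the input graph $G$. Since all the reductions together are computable in time polynomial in $n$, an $O^*(2^{o(N)})$-time algorithm for any of these problems, composed with the reduction of Theorem~\ref{k-so-eth}, would decide {\sc 3-Coloring} in time $O^*(2^{o(n)})$, contradicting ETH.

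First I would observe that the arguments in the proof of Corollary~\ref{cor-seth} use only the following features of the CSO instance: $s$ is the unique secret state; $s$ carries a self-loop under every symbol and has no other outgoing transition, so that $L_m(\A,s,s)=L(\A,s)=\Sigma^*$; and $s$ is an initial state. The construction of Theorem~\ref{k-so-eth} may be assumed to have all three features at no cost in states, since $s$ is never reachable from $q_1$ by a string shorter than $n$: adjoining $s$ to the set of initial states and adding the self-loops merely makes $s$ present in every configuration of $\A$. One checks that this does not disturb the equivalence ``$\A$ is CSO $\iff$ $G$ is not $3$-colorable'': on a proper $3$-coloring $w$ of length $n$ the configuration of $\A$ on $w$ is still exactly $\{s\}$, so $\A$ is still not CSO when $G$ is $3$-colorable; and when $G$ is not $3$-colorable, every string of length at least $n$ still drives $\A$ into the non-secret state $f$, while every shorter string still reaches some non-secret state $q_i$ --- hence $\A$ is CSO.

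With this $\A$ in hand, I would transcribe the four sub-arguments of Corollary~\ref{cor-seth} essentially verbatim. \emph{(a)} $\A$ is CSO with respect to $Q_S=\{s\}$ and $Q_{NS}=Q-\{s\}$ iff $\A$ is LBO with respect to $L_S=L_m(\A,s,s)=\Sigma^*$ and $L_{NS}=L_m(\A,\{q_1\},Q-\{s\})$; as both languages are recognised by $\A$ itself under suitable choices of initial and accepting states, the LBO instance has the same size. \emph{(b)} Since $s$ has no outgoing transition other than its self-loops, $\A$ is CSO iff it is $k$-SO for every $k\in\mathbb{N}\cup\{\infty\}$, which settles $k$-SO and INSO. \emph{(c)} Since $L(\A,s)=\Sigma^*$, $\A$ is CSO iff $\A$ is ISO with respect to $I_S=\{s\}$ and $I_{NS}=\{q_1\}$. \emph{(d)} After declaring all states accepting (no new states), $\A$ is ISO with respect to $I_S,I_{NS}$ iff $\A$ is IFO with respect to $IQ_S=\{(s,s)\}$ and $IQ_{NS}=I_{NS}\times Q$. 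In each case the alphabet is unchanged and the number of states is $N+O(1)=\Theta(n)$.

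The step I expect to be the main obstacle is the bookkeeping in the first observation: one must verify that the two cosmetic changes to the construction of Theorem~\ref{k-so-eth} (making $s$ both initial and a self-looping sink) really preserve the reduction from {\sc 3-Coloring}, which requires re-examining the behaviour of $\A$ on strings shorter than $n$ and on strings longer than $n$. Once that is settled, everything else is a faithful copy of the proof of Corollary~\ref{cor-seth}, so no genuinely new ideas should be needed.
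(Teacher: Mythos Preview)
Your plan works for LBO and for $k$-SO/INSO, but the ISO step (and hence the IFO step built on it) breaks. The reason the argument of Corollary~\ref{cor-seth} goes through for ISO in the SETH construction is that the secret state $q_s$ is \emph{not reachable} from the non-secret initial states $\{x_1^0,\dots,x_n^0\}$; consequently $\delta(I_{NS},w)\subseteq Q_{NS}$ for every $w$, so $\delta(I_{NS},w)=\emptyset$ is equivalent to $\delta(I_{NS},w)\cap Q_{NS}=\emptyset$, which is exactly what ties ISO to CSO. In the ETH construction this fails: the spine $q_1\to q_2\to\cdots\to q_n\to s$ already makes $s$ reachable from $q_1$, and your ``cosmetic'' addition of self-loops on $s$ then gives $s\in\delta(q_1,w)$ for every $w$ with $|w|\ge n$. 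Hence $L(\A,q_1)=\Sigma^*$ regardless of whether $G$ is $3$-colorable, so your instance is \emph{always} ISO with respect to $I_S=\{s\}$ and $I_{NS}=\{q_1\}$, and the reduction collapses.

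The paper does not try to reuse Corollary~\ref{cor-seth} verbatim here. For LBO it keeps the original $\A$ (initial state $q_1$, no self-loops on $s$) and takes $L_S=L_m(\A,q_1,s)$ rather than $\Sigma^*$. For $k$-SO/INSO it uses that $s$ has no outgoing transitions at all. For ISO it builds a new automaton $\A'$ by adding $n$ fresh copies $q_1',\dots,q_n'$ of the spine with all transitions of the $q_i$ \emph{except} those into $s$, takes $I_S=\{q_1\}$ and $I_{NS}=\{q_1'\}$, and argues directly that $\A$ is CSO iff $\A'$ is ISO; IFO is then obtained from $\A'$ by declaring $s$ and $f$ accepting. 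The extra $n$ states are harmless for an $O^*(2^{o(n)})$ bound. So the missing idea in your proposal is precisely this separation of the non-secret initial run from the secret sink; without it the ISO/IFO reductions do not go through.
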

  \begin{proof}
    For the NFA $\A$ of Theorem~\ref{k-so-eth}, we have $\A$ is CSO with respect to $\{s\}$ and $Q-\{s\}$ if and only if $\A$ is LBO with respect to $L_S=L_m(\A,q_1,s)$ and $L_{NS}=L_m(\A,q_1,Q-\{s\})$~\cite{WuLafortune2013}. If we could solve {\sc LBO} in time $O^*(2^{o(n)})$, we could solve {\sc 3-Coloring\/} in time $O^*(2^{o(n)})$.

    Furthermore, since there is no transition from the sole secret state $s$, the automaton $\A$ is CSO with respect to $\{s\}$ and $Q-\{s\}$ if and only if $\A$ is $k$-SO with respect to $\{s\}$ and $Q-\{s\}$, for any $k\in\mathbb{N}\cup\{\infty\}$. Therefore, the result holds for {\sc $k$-SO\/} as well as for {\sc INSO}.

    Now, we take the NFA $\A$ and add a copy of states $q_1,q_2,\ldots,q_n$, denoted by $q_1', q_2',\ldots,q_n'$, together with all transitions to states different from $s$, that is, we add $(q_i',x,p)$ for every transition $(q_i,x,p)$ with $p\neq s$. We set the states $q_1$ and $q_1'$ initial, and denote the result by $\A'$. Then, the NFA $\A$ is CSO with respect to $\{s\}$ and $Q-\{s\}$ if and only if $\A'$ is ISO with respect to $I_S=\{q_1\}$ and $I_{NS}=\{q_1'\}$.
    Indeed, if $\A$ is CSO with respect to $\{s\}$ and $Q-\{s\}$, then for every $w$ moving $\A$ to state $s$, there is $w'$ moving $\A$ to state $f$; and so do the strings $w$ and $w'$ in $\A'$, which shows that $\A'$ is ISO with respect to $I_S=\{q_1\}$ and $I_{NS}=\{q_1'\}$.
    On the other hand, if $\A$ is not CSO with respect to $\{s\}$ and $Q-\{s\}$, then there is $w$ moving $\A$ only to state $s$, and hence $w$ cannot be read by $\A'$ from state $q_1'$, which shows that $\A'$ is not ISO with respect to $I_S=\{q_1\}$ and $I_{NS}=\{q_1'\}$.
    If we could solve {\sc ISO\/} in time $O^*(2^{o(n)})$, we could solve {\sc 3-Coloring\/} in time $O^*(2^{o(n)})$ by reducing it to {\sc ISO\/} and solving {\sc ISO\/} in time $O^*(2^{o(N+n)})=O^*(2^{o(n)})$, for $N=4n-1$.

    If we in addition set the states $s$ and $f$ accepting, then $\A'$ is ISO with respect to $I_S=\{q_1\}$ and $I_{NS}=\{q_1'\}$ if and only if $\A'$ is IFO with respect to $IQ_S=\{(q_1,s)\}$ and $IQ_{NS}=\{(q_1',f)\}$, and hence if we could solve {\sc IFO\/} in time $O^*(2^{o(n)})$, we could solve {\sc 3-Coloring\/} in time $O^*(2^{o(n)})$.
  \end{proof}

\section{Discussion and Conclusions}
  We showed that if the strong exponential time hypothesis holds true, then, for any $c>2$, there are no algorithms deciding various types of opacity in time $O^*(2^{n/c})$. Therefore, the current algorithms cannot be significantly improved.

  More precisely, the results say that there are no algorithms deciding various types of opacity in time $O^*(\sqrt{2}^n) = O^*(1.414213562^n)$. However, the results admit the existence of algorithms deciding opacity in time $O^*(1.5^n)$. Whether such algorithms exist or whether the current lower bounds can be strengthen remains an open problem.

  The construction used in the proof of Theorem~\ref{k-so-seth2} can be utilized to improve the conditional lower bound of deciding universality for NFAs. The universality problem for NFAs asks whether, given an NFA, the NFA accepts all strings over its alphabet. If we set the only secret state $q_s$ of the NFA $\A$ of Theorem~\ref{k-so-seth2} to be non-accepting and all the other states to be accepting, we obtain an NFA that is universal if and only if the automaton $\A$ is CSO with respect to $\{q_s\}$ and $Q-\{q_s\}$.
  We thus have the following consequence improving the result of Fernau and Krebs~\cite{FernauK17}.

  \begin{corollary}\label{NFAuniv}
    Unless SETH fails, there is no algorithm deciding whether a given $n$-state NFA is universal that runs in time $O^*(2^{n/(2+\eps)})$, for any $\eps > 0$.\qed
  \end{corollary}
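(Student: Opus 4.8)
The plan is to reuse the NFA $\A_\varphi$ constructed in the proof of Theorem~\ref{k-so-seth2} and merely modify its set of accepting states. Recall that $\A_\varphi$ has $N = 2n+2$ states, is built in polynomial time from a $k$-CNF formula $\varphi$ with $n$ variables, and satisfies: $\A_\varphi$ is CSO with respect to $Q_S = \{q_s\}$ and $Q_{NS} = Q - \{q_s\}$ if and only if $\varphi$ is satisfiable. I would form an NFA $\mathcal{B}_\varphi$ from $\A_\varphi$ by declaring $q_s$ non-accepting and every other state accepting, while keeping the same states, alphabet, transitions, and initial states. The claim to establish is then that $\mathcal{B}_\varphi$ is universal, i.e.\ $L_m(\mathcal{B}_\varphi) = \Sigma^*$, if and only if $\A_\varphi$ is CSO.

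For this equivalence the key observation is Claim~\ref{claim1}: every reachable configuration of $\A_\varphi$ contains $q_s$, because $q_s$ is initial and carries a self-loop on every symbol, so (by a trivial induction) $q_s \in \delta(I,w)$ for every $w$ and the empty configuration is never reached. Consequently $\A_\varphi$ is CSO precisely when every reachable configuration contains at least one state other than $q_s$. On the other hand, since $P$ is the identity in this construction, the configuration reached on a string $w$ is exactly $\delta(I,w)$, and $\mathcal{B}_\varphi$ accepts $w$ precisely when $\delta(I,w)$ contains an accepting state, i.e.\ a state different from $q_s$. Chaining these: $\mathcal{B}_\varphi$ is universal $\iff$ every $\delta(I,w)$ contains a state $\neq q_s$ $\iff$ $\A_\varphi$ is CSO. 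Both directions fall out of this chain.

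Finally I would transfer the lower bound exactly as in Theorem~\ref{k-so-seth2}. Suppose, for contradiction, that universality of an $N$-state NFA can be decided in time $O^*(2^{N/(2+\eps)})$ for some $\eps > 0$. Given $\varphi$ in $k$-CNF with $n$ variables, build $\mathcal{B}_\varphi$ in polynomial time (it has $N = 2n+2$ states), decide its universality, and thereby decide satisfiability of $\varphi$. The total running time is $\mathrm{poly}(n) + O^*(2^{(2n+2)/(2+\eps)}) = O^*(2^{(1-\delta)n})$ with $\delta = \eps/(2+\eps) > 0$, contradicting SETH. Because the argument is essentially a re-reading of the proof of Theorem~\ref{k-so-seth2}, I do not expect a genuine obstacle; the only point requiring care is the treatment of the "dead" configuration---one must check that the self-loops on $q_s$ make $\emptyset$ unreachable from $I$, so that universality of $\mathcal{B}_\varphi$ coincides with "$q_s$ is never the only state in a reachable configuration" rather than with a strictly weaker condition.
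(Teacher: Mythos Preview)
Your proposal is correct and follows essentially the same approach as the paper: set $q_s$ non-accepting and all other states accepting, observe that universality of the resulting NFA coincides with CSO of $\A_\varphi$, and transfer the SETH lower bound via the same arithmetic $N=2n+2$, $\delta=\eps/(2+\eps)$. You in fact spell out more detail than the paper does, in particular the role of Claim~\ref{claim1} in ruling out the empty configuration so that ``accepts $w$'' is equivalent to ``$\delta(I,w)$ contains a state $\neq q_s$''.
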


  Consequently, we immediately have the following result.
  \begin{corollary}
    Given two NFAs $\A_1$ and $\A_2$ with $n_1$ and $n_2$ states,
    respectively, let $n=\max(n_1, n_2)$.  Unless SETH fails, there is no algorithm deciding
    whether $L_m(\A_1) \subseteq L_m(\A_2)$ in time
    $O^*(2^{n/(2+\eps)})$, and there is no algorithm deciding
    whether $L_m(\A_1) = L_m(\A_2)$ in time
    $O^*(2^{n/(2+\eps)})$, for any
    $\eps > 0$.\qed
  \end{corollary}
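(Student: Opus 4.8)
The plan is to obtain both claims as immediate specializations of the universality lower bound of Corollary~\ref{NFAuniv}, exploiting the standard fact that universality is a special case both of inclusion and of equivalence testing, realized by comparing against a trivial automaton.

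First, for the inclusion claim, I would proceed as follows. Given an arbitrary $n$-state NFA $\A=(Q,\Sigma,\delta,I,F)$, introduce the one-state NFA $\mathcal{U}_\Sigma$ over the same alphabet $\Sigma$ whose single state is both initial and accepting and carries a self-loop on every symbol of $\Sigma$, so that $L_m(\mathcal{U}_\Sigma)=\Sigma^*$. Then $\A$ is universal if and only if $L_m(\mathcal{U}_\Sigma)\subseteq L_m(\A)$, since $L_m(\A)\subseteq\Sigma^*$ always holds. The automaton $\mathcal{U}_\Sigma$ has exactly one state and is constructible in time linear in $|\Sigma|\le|\A|$; in particular $\max(1,n)=n$. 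Hence an algorithm deciding $L_m(\A_1)\subseteq L_m(\A_2)$ in time $O^*(2^{n/(2+\eps)})$ with $n=\max(n_1,n_2)$, applied with $\A_1=\mathcal{U}_\Sigma$ and $\A_2=\A$, would decide universality of $\A$ in time $O^*(2^{n/(2+\eps)})$, contradicting Corollary~\ref{NFAuniv} and therefore SETH.

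Second, for the equivalence claim, I would reuse the same $\mathcal{U}_\Sigma$: the NFA $\A$ is universal if and only if $L_m(\A)=L_m(\mathcal{U}_\Sigma)$, because $L_m(\mathcal{U}_\Sigma)=\Sigma^*$. The identical padding argument then shows that an $O^*(2^{n/(2+\eps)})$-time equivalence test yields an $O^*(2^{n/(2+\eps)})$-time universality test, again contradicting Corollary~\ref{NFAuniv} under SETH.

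I do not expect a genuine obstacle here; the argument is a routine reduction, which is why the authors mark it with \qed. The only points requiring (minimal) care are that the auxiliary automaton $\mathcal{U}_\Sigma$ must have at most $n$ states, so that taking $n=\max(n_1,n_2)$ does not inflate the parameter -- this holds since $\mathcal{U}_\Sigma$ has a single state -- and that the construction cost of $\mathcal{U}_\Sigma$, which is polynomial (indeed linear in the size of the alphabet of the input automaton), is absorbed into the $O^*$ notation along with the polynomial-time reduction underlying Corollary~\ref{NFAuniv}, so that the composed reduction still refutes SETH.
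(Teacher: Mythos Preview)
Your proposal is correct and matches the paper's intent: the corollary is stated without proof (just a \qed), as an immediate consequence of Corollary~\ref{NFAuniv} via exactly the standard reduction you describe, comparing against a trivial one-state automaton for $\Sigma^*$. The care you take with the state-count bookkeeping and the absorption of polynomial overhead into $O^*$ is appropriate and suffices.
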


  We left the question whether Theorem~\ref{k-so-seth2} also holds for NFAs over a fixed-size alphabet open. Although we did not answer this question,
  we showed that ETH implies the non-existence of sub-exponential-time algorithms deciding various types of opacity over a binary alphabet.

  Inspecting Table~\ref{table01}, the reader may notice quite a large gap between the lower and upper bounds for the verification of IFO without any restrictions on the form of non-secret pairs. To improve the upper bound or to (conditionally) show that no such improvements are possible is a challenging open problem.

  It is worth noticing that the construction in the proof of Corollary~\ref{cor-seth} produces an instance of a special case of the problem where the non-secret pairs are of the form $IQ_{NS}=I_{NS}\times F_{NS}$, and hence the special case is tight under the strong exponential time hypothesis.

\bibliography{mybib}

\begin{thebibliography}{10}

\bibitem{Alur2006}
Rajeev Alur, Pavol \v{C}ern{\'{y}}, and Steve Zdancewic.
\newblock Preserving secrecy under refinement.
\newblock In Michele Bugliesi, Bart Preneel, Vladimiro Sassone, and Ingo
  Wegener, editors, {\em International Colloquium on Automata, Languages and
  Programming, {ICALP} 2006, Venice, Italy, July 10-14, 2006}, volume 4052 of
  {\em Lecture Notes in Computer Science}, pages 107--118. Springer, 2006.
\newblock \href {https://doi.org/10.1007/11787006\_10}
  {\path{doi:10.1007/11787006\_10}}.

\bibitem{AsveldN00}
Peter R.~J. Asveld and Anton Nijholt.
\newblock The inclusion problem for some subclasses of context-free languages.
\newblock {\em Theoretical Computer Science}, 230(1-2):247--256, 2000.
\newblock \href {https://doi.org/10.1016/S0304-3975(99)00113-9}
  {\path{doi:10.1016/S0304-3975(99)00113-9}}.

\bibitem{BadouelBBCD07}
{\'{E}}ric Badouel, Marek~A. Bednarczyk, Andrzej~M. Borzyszkowski,
  Beno{\^{\i}}t Caillaud, and Philippe Darondeau.
\newblock Concurrent secrets.
\newblock {\em Discrete Event Dynamic Systems}, 17(4):425--446, 2007.
\newblock \href {https://doi.org/10.1007/s10626-007-0020-5}
  {\path{doi:10.1007/s10626-007-0020-5}}.

\bibitem{Badouel2007}
{\'{E}}ric Badouel, Marek~A. Bednarczyk, Andrzej~M. Borzyszkowski,
  Beno{\^{\i}}t Caillaud, and Philippe Darondeau.
\newblock Concurrent secrets.
\newblock {\em Discrete Event Dynamic Systems}, 17(4):425--446, 2007.
\newblock \href {https://doi.org/10.1007/s10626-007-0020-5}
  {\path{doi:10.1007/s10626-007-0020-5}}.

\bibitem{BalunMasopustIFACWC2020}
Ji\v{r}{\'{\i}} Balun and Tom{\'{a}}\v{s} Masopust.
\newblock On opacity verification for discrete-event systems.
\newblock In {\em IFAC World Congress, Berlin, Germany, July 11-17, 2020},
  pages 2105--2110, 2020.
\newblock \href {https://doi.org/10.1016/j.ifacol.2020.12.2524}
  {\path{doi:10.1016/j.ifacol.2020.12.2524}}.

\bibitem{BalunM21}
Ji\v{r}\'{i} Balun and Tom\'{a}\v{s} Masopust.
\newblock Comparing the notions of opacity for discrete-event systems.
\newblock {\em Discrete Event Dynamic Systems}, 31(4):553--582, 2021.
\newblock \href {https://doi.org/10.1007/s10626-021-00344-2}
  {\path{doi:10.1007/s10626-021-00344-2}}.

\bibitem{BalunM22}
Ji\v{r}{\'{\i}} Balun and Tom{\'{a}}\v{s} Masopust.
\newblock On transformations among opacity notions.
\newblock In {\em {IEEE} International Conference on Systems, Man, and
  Cybernetics, {SMC} 2022, Prague, Czech Republic, October 9-12, 2022}, pages
  3012--3017. {IEEE}, 2022.
\newblock \href {https://doi.org/10.1109/SMC53654.2022.9945608}
  {\path{doi:10.1109/SMC53654.2022.9945608}}.

\bibitem{BestDG2010}
Eike Best, Philippe Darondeau, and Roberto Gorrieri.
\newblock On the decidability of non interference over unbounded {P}etri nets.
\newblock In Konstantinos Chatzikokolakis and V{\'{e}}ronique Cortier, editors,
  {\em International Workshop on Security Issues in Concurrency, SecCo 2010,
  Paris, France, August 30, 2010}, volume~51 of {\em {EPTCS}}, pages 16--33,
  2010.
\newblock \href {https://doi.org/10.4204/EPTCS.51.2}
  {\path{doi:10.4204/EPTCS.51.2}}.

\bibitem{BryansKMR08}
Jeremy~W. Bryans, Maciej Koutny, Laurent Mazar{\'{e}}, and Peter Y.~A. Ryan.
\newblock Opacity generalised to transition systems.
\newblock {\em International Journal of Information Security}, 7(6):421--435,
  2008.
\newblock \href {https://doi.org/10.1007/s10207-008-0058-x}
  {\path{doi:10.1007/s10207-008-0058-x}}.

\bibitem{BryansKR04}
Jeremy~W. Bryans, Maciej Koutny, and Peter Y.~A. Ryan.
\newblock Modelling dynamic opacity using petri nets with silent actions.
\newblock In Theodosis Dimitrakos and Fabio Martinelli, editors, {\em Formal
  Aspects in Security and Trust: Second {IFIP} {TC1} {WG1.7} Workshop on Formal
  Aspects in Security and Trust (FAST), an event of the 18th {IFIP} World
  Computer Congress, Toulouse, France, August 22-27, 2004}, volume 173 of {\em
  {IFIP}}, pages 159--172. Springer, 2004.
\newblock \href {https://doi.org/10.1007/0-387-24098-5\_12}
  {\path{doi:10.1007/0-387-24098-5\_12}}.

\bibitem{Bryans2005}
Jeremy~W. Bryans, Maciej Koutny, and Peter~Y.A. Ryan.
\newblock Modelling opacity using {P}etri nets.
\newblock {\em Electronic Notes in Theoretical Computer Science}, 121:101--115,
  2005.
\newblock Proceedings of the 2nd International Workshop on Security Issues with
  {P}etri Nets and other Computational Models (WISP 2004).
\newblock \href {https://doi.org/10.1016/j.entcs.2004.10.010}
  {\path{doi:10.1016/j.entcs.2004.10.010}}.

\bibitem{BusiG09}
Nadia Busi and Roberto Gorrieri.
\newblock Structural non-interference in elementary and trace nets.
\newblock {\em Mathematical Structures in Computer Science}, 19(6):1065--1090,
  2009.
\newblock \href {https://doi.org/10.1017/S0960129509990120}
  {\path{doi:10.1017/S0960129509990120}}.

\bibitem{Cassez2012}
Franck Cassez, J{\'{e}}r{\'{e}}my Dubreil, and Herv{\'{e}} Marchand.
\newblock Synthesis of opaque systems with static and dynamic masks.
\newblock {\em Formal Methods in System Design}, 40(1):88--115, 2012.
\newblock \href {https://doi.org/10.1007/s10703-012-0141-9}
  {\path{doi:10.1007/s10703-012-0141-9}}.

\bibitem{Dingledine}
Roger Dingledine, Nick Mathewson, and Paul Syverson.
\newblock Reputation in {P2P} anonymity systems.
\newblock In {\em Workshop on Economics of Peer-to-Peer Systems, {P2P Econ}
  2003, Berkeley, CA, USA, June 5-6, 2003}, 2003.

\bibitem{Dubreil2008}
Jeremy Dubreil, Philippe Darondeau, and Herve Marchand.
\newblock Opacity enforcing control synthesis.
\newblock In {\em International Workshop on Discrete Event Systems, {WODES}
  2008, Gothenburg, Sweden, May 28-30, 2008}, pages 28--35. {IEEE}, 2008.
\newblock \href {https://doi.org/10.1109/WODES.2008.4605918}
  {\path{doi:10.1109/WODES.2008.4605918}}.

\bibitem{FernauK17}
Henning Fernau and Andreas Krebs.
\newblock Problems on finite automata and the exponential time hypothesis.
\newblock {\em Algorithms}, 10(1):24, 2017.
\newblock \href {https://doi.org/10.3390/a10010024}
  {\path{doi:10.3390/a10010024}}.

\bibitem{Focardi94}
R.~Focardi and R.~Gorrieri.
\newblock A taxonomy of trace-based security properties for {CCS}.
\newblock In {\em Computer Security Foundations Workshop VII, Franconia, NH,
  USA, June 14-16, 1994}, pages 126--136, 1994.
\newblock \href {https://doi.org/10.1109/CSFW.1994.315941}
  {\path{doi:10.1109/CSFW.1994.315941}}.

\bibitem{Goes2018}
Rômulo~Meira Góes, Blake~C. Rawlings, Nicholas Recker, Gregory Willett, and
  Stéphane Lafortune.
\newblock Demonstration of indoor location privacy enforcement using
  obfuscation.
\newblock {\em IFAC-PapersOnLine}, 51(7):145--151, 2018.
\newblock \href {https://doi.org/10.1016/j.ifacol.2018.06.293}
  {\path{doi:10.1016/j.ifacol.2018.06.293}}.

\bibitem{Hadj-Alouane05}
N.~B. Hadj-Alouane, S.~Lafrance, Feng Lin, J.~Mullins, and M.~M. Yeddes.
\newblock On the verification of intransitive noninterference in mulitlevel
  security.
\newblock {\em {IEEE} Transactions on Systems, Man, and Cybernetics, Part B},
  35(5):948--958, 2005.
\newblock \href {https://doi.org/10.1109/TSMCB.2005.847749}
  {\path{doi:10.1109/TSMCB.2005.847749}}.

\bibitem{Hadjicostis2020}
Christoforos~N. Hadjicostis.
\newblock {\em Estimation and Inference in Discrete Event Systems: A
  Model-Based Approach with Finite Automata}.
\newblock Communications and Control Engineering. Springer Cham, 2020.
\newblock \href {https://doi.org/10.1007/978-3-030-30821-6}
  {\path{doi:10.1007/978-3-030-30821-6}}.

\bibitem{Hertli14}
Timon Hertli.
\newblock 3-{SAT} faster and simpler - unique-sat bounds for {PPSZ} hold in
  general.
\newblock {\em {SIAM} Journal on Computing}, 43(2):718--729, 2014.
\newblock \href {https://doi.org/10.1137/120868177}
  {\path{doi:10.1137/120868177}}.

\bibitem{HopcroftU79}
John~E. Hopcroft, Rajeev Motwani, and Jeffrey~D. Ullman.
\newblock {\em Introduction to automata theory, languages, and computation}.
\newblock Pearson international edition. Addison-Wesley, third edition, 2007.

\bibitem{ImpagliazzoP01}
Russell Impagliazzo and Ramamohan Paturi.
\newblock On the complexity of k-{SAT}.
\newblock {\em Journal of Computer and System Sciences}, 62(2):367--375, 2001.
\newblock \href {https://doi.org/10.1006/jcss.2000.1727}
  {\path{doi:10.1006/jcss.2000.1727}}.

\bibitem{JacobLF16}
Romain Jacob, Jean{-}Jacques Lesage, and Jean{-}Marc Faure.
\newblock Overview of discrete event systems opacity: Models, validation, and
  quantification.
\newblock {\em Annual Reviews in Control}, 41:135--146, 2016.
\newblock \href {https://doi.org/10.1016/j.arcontrol.2016.04.015}
  {\path{doi:10.1016/j.arcontrol.2016.04.015}}.

\bibitem{Lin94}
Feng Lin.
\newblock Diagnosability of discrete event systems and its applications.
\newblock {\em Discrete Event Dynamic Systems}, 4(2):197--212, 1994.
\newblock \href {https://doi.org/10.1007/BF01441211}
  {\path{doi:10.1007/BF01441211}}.

\bibitem{Lin2011}
Feng Lin.
\newblock Opacity of discrete event systems and its applications.
\newblock {\em Automatica}, 47(3):496--503, 2011.
\newblock \href {https://doi.org/10.1016/j.automatica.2011.01.002}
  {\path{doi:10.1016/j.automatica.2011.01.002}}.

\bibitem{LinW88}
Feng Lin and Walter~Murray Wonham.
\newblock On observability of discrete-event systems.
\newblock {\em Information Sciences}, 44(3):173--198, 1988.
\newblock \href {https://doi.org/10.1016/0020-0255(88)90001-1}
  {\path{doi:10.1016/0020-0255(88)90001-1}}.

\bibitem{Masopust18}
Tom{\'{a}}\v{s} Masopust.
\newblock Complexity of deciding detectability in discrete event systems.
\newblock {\em Automatica}, 93:257--261, 2018.
\newblock \href {https://doi.org/10.1016/j.automatica.2018.03.077}
  {\path{doi:10.1016/j.automatica.2018.03.077}}.

\bibitem{MasopustY19}
Tom{\'{a}}\v{s} Masopust and Xiang Yin.
\newblock Complexity of detectability, opacity and {A}-diagnosability for
  modular discrete event systems.
\newblock {\em Automatica}, 101:290--295, 2019.
\newblock \href {https://doi.org/10.1016/j.automatica.2018.12.019}
  {\path{doi:10.1016/j.automatica.2018.12.019}}.

\bibitem{Mazare04}
Laurent Mazar{\'{e}}.
\newblock Decidability of opacity with non-atomic keys.
\newblock In Theodosis Dimitrakos and Fabio Martinelli, editors, {\em Formal
  Aspects in Security and Trust: Second {IFIP} {TC1} {WG1.7} Workshop on Formal
  Aspects in Security and Trust (FAST), an event of the 18th {IFIP} World
  Computer Congress, Toulouse, France, August 22-27, 2004}, volume 173 of {\em
  {IFIP}}, pages 71--84. Springer, 2004.
\newblock \href {https://doi.org/10.1007/0-387-24098-5\_6}
  {\path{doi:10.1007/0-387-24098-5\_6}}.

\bibitem{PaturiPSZ05}
Ramamohan Paturi, Pavel Pudl{\'{a}}k, Michael~E. Saks, and Francis Zane.
\newblock An improved exponential-time algorithm for \emph{k}-{SAT}.
\newblock {\em Journal of the ACM}, 52(3):337--364, 2005.
\newblock \href {https://doi.org/10.1145/1066100.1066101}
  {\path{doi:10.1145/1066100.1066101}}.

\bibitem{ReiterR98}
Michael~K. Reiter and Aviel~D. Rubin.
\newblock Crowds: Anonymity for web transactions.
\newblock {\em {ACM} Transactions on Information and System Security},
  1(1):66--92, 1998.
\newblock \href {https://doi.org/10.1145/290163.290168}
  {\path{doi:10.1145/290163.290168}}.

\bibitem{SabelfeldM03}
Andrei Sabelfeld and Andrew~C. Myers.
\newblock Language-based information-flow security.
\newblock {\em {IEEE} Journal on Selected Areas in Communications},
  21(1):5--19, 2003.
\newblock \href {https://doi.org/10.1109/JSAC.2002.806121}
  {\path{doi:10.1109/JSAC.2002.806121}}.

\bibitem{Saboori2011}
Anooshiravan Saboori.
\newblock {\em Verification and enforcement of state-based notions of opacity
  in discrete event systems}.
\newblock PhD thesis, University of Illinois at Urbana-Champaign, 2011.

\bibitem{SabooriH12a}
Anooshiravan Saboori and Christoforos~N. Hadjicostis.
\newblock Verification of infinite-step opacity and complexity considerations.
\newblock {\em {IEEE} Transactions on Automatic Control}, 57(5):1265--1269,
  2012.
\newblock \href {https://doi.org/10.1109/TAC.2011.2173774}
  {\path{doi:10.1109/TAC.2011.2173774}}.

\bibitem{SabooriH2013}
Anooshiravan Saboori and Christoforos~N. Hadjicostis.
\newblock Verification of initial-state opacity in security applications of
  discrete event systems.
\newblock {\em Information Sciences}, 246:115--132, 2013.
\newblock \href {https://doi.org/10.1016/j.ins.2013.05.033}
  {\path{doi:10.1016/j.ins.2013.05.033}}.

\bibitem{SampathSLST95}
Meera Sampath, Raja Sengupta, St{\'{e}}phane Lafortune, Kasim Sinnamohideen,
  and Demosthenis Teneketzis.
\newblock Diagnosability of discrete-event systems.
\newblock {\em {IEEE} Transactions on Automatic Control}, 40(9):1555--1575,
  1995.
\newblock \href {https://doi.org/10.1109/9.412626}
  {\path{doi:10.1109/9.412626}}.

\bibitem{SchneiderS96}
Steve~A. Schneider and Abraham Sidiropoulos.
\newblock {CSP} and anonymity.
\newblock In Elisa Bertino, Helmut Kurth, Giancarlo Martella, and Emilio
  Montolivo, editors, {\em European Symposium on Research in Computer Security,
  {ESORICS} 1996, Rome, Italy, September 25-27, 1996}, volume 1146 of {\em
  Lecture Notes in Computer Science}, pages 198--218. Springer, 1996.
\newblock \href {https://doi.org/10.1007/3-540-61770-1\_38}
  {\path{doi:10.1007/3-540-61770-1\_38}}.

\bibitem{ShuLY07}
Shaolong Shu, Feng Lin, and Hao Ying.
\newblock Detectability of discrete event systems.
\newblock {\em {IEEE} Transactions on Automatic Control}, 52(12):2356--2359,
  2007.
\newblock \href {https://doi.org/10.1109/TAC.2007.910713}
  {\path{doi:10.1109/TAC.2007.910713}}.

\bibitem{sloan2}
N.~J.~A. Sloan.
\newblock The on-line encyclopedia of integer sequences ({OEIS}), 2023.
\newblock {A123121}.
\newblock URL: \url{https://oeis.org/A123121}.

\bibitem{sloan}
N.~J.~A. Sloan.
\newblock The on-line encyclopedia of integer sequences ({OEIS}), 2023.
\newblock {A001511}.
\newblock URL: \url{https://oeis.org/A001511}.

\bibitem{WintenbergBLO22}
Andrew Wintenberg, Matthew Blischke, St{\'{e}}phane Lafortune, and Necmiye
  Ozay.
\newblock A dynamic obfuscation framework for security and utility.
\newblock In {\em {ACM/IEEE} International Conference on Cyber-Physical
  Systems, {ICCPS} 2022, Milano, Italy, May 4-6, 2022}, pages 236--246. {IEEE},
  2022.
\newblock \href {https://doi.org/10.1109/ICCPS54341.2022.00028}
  {\path{doi:10.1109/ICCPS54341.2022.00028}}.

\bibitem{WuLafortune2013}
Yi{-}Chin Wu and St{\'{e}}phane Lafortune.
\newblock Comparative analysis of related notions of opacity in centralized and
  coordinated architectures.
\newblock {\em Discrete Event Dynamic Systems}, 23(3):307--339, 2013.
\newblock \href {https://doi.org/10.1007/s10626-012-0145-z}
  {\path{doi:10.1007/s10626-012-0145-z}}.

\bibitem{WuRRLS18}
Yi{-}Chin Wu, Vasumathi Raman, Blake~C. Rawlings, St{\'{e}}phane Lafortune, and
  Sanjit~A. Seshia.
\newblock Synthesis of obfuscation policies to ensure privacy and utility.
\newblock {\em Journal of Automated Reasoning}, 60(1):107--131, 2018.
\newblock \href {https://doi.org/10.1007/s10817-017-9420-x}
  {\path{doi:10.1007/s10817-017-9420-x}}.

\bibitem{Yin2017}
Xiang Yin and St{\'{e}}phane Lafortune.
\newblock A new approach for the verification of infinite-step and {K}-step
  opacity using two-way observers.
\newblock {\em Automatica}, 80:162--171, 2017.
\newblock \href {https://doi.org/10.1016/j.automatica.2017.02.037}
  {\path{doi:10.1016/j.automatica.2017.02.037}}.

\bibitem{ZakinthinosL97}
Aris Zakinthinos and E.~Stewart Lee.
\newblock A general theory of security properties.
\newblock In {\em {IEEE} Symposium on Security and Privacy, Oakland, CA, USA,
  May 4-7, 1997}, pages 94--102. {IEEE} Computer Society, 1997.
\newblock \href {https://doi.org/10.1109/SECPRI.1997.601322}
  {\path{doi:10.1109/SECPRI.1997.601322}}.

\end{thebibliography}

\appendix

\section{Proofs of Claims}

  \claimA*
  \begin{claimproof}
    By induction on $\ell$.
    The initial configuration of $\A_\varphi^X$ is $\{x_{n}^0, x_{n-1}^{0}, \ldots, x_1^{0}\}$, where $00\cdots 0$ represents $\ell=0$ in binary.
    Assume that the configuration of $\A_\varphi^X$ after reading the prefix of $w\in Z_n$ of length $\ell < 2^n-1$ is $\{x_{n}^{r_{n}}, x_{n-1}^{r_{n-1}}, \ldots, x_1^{r_1}\}$, where $r_{n}r_{n-1}\cdots r_1$ represents $\ell$ in binary.
    Let $r_t$ be the rightmost zero of $r_{n}r_{n-1}\cdots r_1$; that is, $r_{n}r_{n-1}\cdots r_1 = r_{n}r_{n-1}\cdots r_{t+1} 0 1 \cdots 1$. Then, $\ell+1$ is represented as $r_{n}r_{n-1}\cdots r_{t+1} 1 0 \cdots 0$ in binary, and because it has $t-1$ trailing zeros, the $(\ell+1)$st symbol of $w$ is of the form $\{a_{t}\}\times C$ by~\eqref{key}.
    It remains to show that every $(a_t,c)\in \{a_t\}\times C$ moves $\A_\varphi^X$ from the configuration $\{x_{n}^{r_{n}}, x_{n-1}^{r_{n-1}}, \ldots, x_{t+1}^{r_{t+1}}, x_t^0, x_{t-1}^1, \ldots, x_1^1\}$ to the configuration $\{x_{n}^{r_{n}}, x_{n-1}^{r_{n-1}}, \ldots, x_{t+1}^{r_{t+1}}, x_{t}^1, x_{t-1}^0,\ldots, x_1^{0}\}$.
    By the definition of $\A_\varphi^X$, the transition under $(a_{t},c)$ is undefined in states $x_{t-1}^{1}, \ldots, x_1^{1}$, it is a self-loop in states $x_{n}^{r_{n}}, \ldots, x_{t+1}^{r_{t+1}}$, and it moves $\A_\varphi^X$ from state $x_{t}^0$ to states $x_{t}^1$ and $x_{t-1}^{0}, \ldots, x_{1}^{0}$. Therefore, the automaton $\A_\varphi^X$ moves to the configuration $\{x_{n}^{r_{n}}, x_{n-1}^{r_{n-1}}, \ldots, x_{t+1}^{r_{t+1}}, x_{t}^1, x_{t-1}^0,\ldots, x_1^{0}\}$, where $r_{n}r_{n-1}\cdots r_{t+1} 1 0\cdots {0}$ represents $\ell+1$ in binary.
  \end{claimproof}

  \claimB*
  \begin{claimproof}
    The secret state $q_s$ is an initial state of $\A_\varphi$ and since it contains a self-loop under all symbols of $\Sigma$, it appears in every configuration of $\A_\varphi$.
  \end{claimproof}

  We now show that if there is a satisfying assignment, then the non-secret state $q_{ns}$ is reached by $\A_\varphi$. To this end, for a variable $x$ and $r\in\{0,1\}$, we define the function
  \[
    \textsc{lit}(x^r)=
    \begin{cases}
      \neg x & \text{ if } r=0\\
            x & \text{ if } r=1
    \end{cases}
  \]

  \claimC*
  \begin{claimproof}
    By Claim~\ref{claim3b}, the configuration of $\A_\varphi$ after reading $w$ is $\{x_{n}^{r_{n}}, x_{n-1}^{r_{n-1}},\ldots, x_1^{r_1}\}\cup Y,$ where
    $Y \in \{ \{q_s\}, \{q_s,q_{ns}\} \}$.
    Since the assignment $r_{n}r_{n-1}\cdots r_{1}$ satisfies $\varphi$, for every $c \in C$, there is $i$ such that $\textsc{lit}(x_i^{r_i})$ satisfies $c$, and hence there are transitions from $x_i^{r_i}$ to $q_{ns}$ under $(a_j, c)$ for all $j = 1,\ldots,n+1$. Therefore, after reading any symbol from $\Sigma$, the configuration of $\A_\varphi$ contains $q_{ns}$.
  \end{claimproof}

  \claimD*
  \begin{claimproof}
    We construct a sequence $\eps = w_0, w_1, \dots, w_{2^n - 1} = w_{\varphi}$ of prefixes of the required string such that, for $\ell=0,\ldots,2^n-1$, the configuration of $\A_\varphi$ after reading $w_\ell$ is $\{x_n^{r_n},\dots,x_1^{r_1}\} \cup \{q_s\}$ and $r_n\cdots r_1$ is the representation of $\ell$ in binary.

    We proceed by induction on $\ell$. Since the initial configuration of $\A_\varphi$ is $\{x_n^0,x_{n-1}^0,\dots,x_1^0\} \cup \{q_s\}$, the claim holds for $\ell=0$. Now, assume that the claim holds for $w_\ell$ with $\ell < 2^n-1$, and denote the configuration of $A_\varphi$ after reading $w_\ell$ by $\{x_n^{r_n},\dots,x_1^{r_1}\} \cup \{q_s\}$. We show that there is $(a,c) \in \Sigma$ such that $w_{\ell+1} = w_{\ell}(a,c)$. Let $r_t$ be the rightmost 0 of $r_n\cdots r_1$ and take $a = a_t$. Then, for any choice of $c$, the string $w_{\ell+1}$ is, by~\eqref{key}, a prefix of a string in $Z_n$, and, by Claim~\ref{claim3a}, the configuration of $A_\varphi$ after reading $w_{\ell+1}$ is $\{x_n^{r'_n},\ldots ,x_1^{r'_1}\} \cup Y$, where $Y \in \{\{q_s\}, \{q_s,q_{ns}\}\}$ and $r'_n\cdots r'_1$ represents $\ell+1$ in binary. Since $\varphi$ is not satisfiable, there is a clause $c'$ that is not satisfied by $\textsc{lit}(x_i^{r_i})$ for any $i = 1,\ldots,n$. Taking $c = c'$ then gives the required symbol. Indeed, the transition from $x_i^{r_i}$ under $(a_t,c')$ is undefined for $i = 1,\dots,t-1$, it is a self-loop for $i=t+1,\ldots,n$, and it takes $x_t^0$ to $\{x_t^1\} \cup \{x_j^0 \mid 1 \leq j \leq t-1\}$, see Figure~\ref{fig:kso-seth-cons}; therefore, $Y=\{q_s\}$.
  \end{claimproof}

\end{document}